\newcommand{\myparskip}{3pt}
\newcommand{\headers}[3]{
\newpage\setcounter{page}{1}
\def\@oddhead{$\underline{\hbox to\textwidth{%
\textbf{\rlap{#1}\phantom{hj}\hfill #2 \hfill \llap{#3}}}}$}
\def\@oddfoot{\hfill\thepage\hfill}}
\newtheorem{lemma}{Lemma}[section]
\newtheorem{theorem}[lemma]{Theorem}
\newtheorem{corollary}[lemma]{Corollary}
\newtheorem{prop}[lemma]{Proposition}
\newtheorem{remark}[lemma]{Remark}
\renewenvironment{proof}{\vspace{-0.1in}\noindent{\bf Proof:}}%
        {\hspace*{\fill}$\Box$\par}
        {\hspace*{\fill}$\Box$\par}
        {\hspace*{\fill}$\Box$\par}
\def\etal{\emph{et al.}\xspace}
\def\opt{\mathrm{OPT}}
\def\script#1{\mathcal{#1}}
\def\card#1{\left|#1\right|}
\def\set#1{\left\{#1\right\}}
\def\pair#1{\left<#1\right>}
\def\sep{\;|\;}
\def\sA{\script{A}}
\def\sG{\script{G}}
\def\sS{\script{S}}
\def\mypar#1{{\medskip\noindent \textbf{#1}}}
\def\prob#1{\textsf{\textup{#1}}\xspace}
\def\capacity{\mathrm{cap}}
\def\minDSlp{\prob{\minDS-LP}}
\def\minCDSlp{\prob{\minCDS-LP}}
\def\nwST{\prob{NW-Steiner-Tree}}
\def\nwSTlp{\prob{\nwST-LP}}
\def\NP{\mathrm{NP}}
\def\DTIME{\mathrm{DTIME}}
\def\PTAS{\mathrm{PTAS}}
\def\CDSpack{\prob{CDS-Packing}}
\def\capCDSpack{\prob{Cap-CDS-Packing}}
\def\minCDS{\prob{Min-Cost-CDS}}
\def\minDS{\prob{Min-Cost-DS}}
\def\cost{\mathrm{cost}}
\def\vx{\mathrm{\mathbf{x}}}
\def\vy{\mathrm{\mathbf{y}}}
\begin{document}

\title{Connected Domatic Packings in Node-capacitated Graphs}
\author{
Alina Ene\thanks{Dept.\ of Computer Science, University of Illinois,
Urbana, IL, 61801. Supported in part by NSF grant CCF-1016684 and a
Chirag Foundation graduate fellowship. This work was done while the
author was visiting the IBM Almaden Research center.
{\tt ene1@illinois.edu}.}
\and
Nitish Korula\thanks{
Google Research, New York, NY 10011. Part of this work was done while
the author was a student at University of Illinois.
{\tt nitish@google.com}.}
\and
Ali Vakilian\thanks{
Dept.\ of Computer Science, University of Illinois, Urbana, IL,
61801. Supported in part by NSF grant CCF-1016684 and a Siebel Scholar award. 
{\tt vakilia2@illinois.edu}.}}

\date{\today}

\pagenumbering{gobble}
\maketitle

\begin{abstract}
	A set of vertices in a graph is a dominating set if every vertex
	outside the set has a neighbor in the set. A dominating set is
	connected if the subgraph induced by its vertices is connected.
	The connected domatic partition problem asks for a partition of
	the nodes into connected dominating sets. The connected domatic
	number of a graph is the size of a largest connected domatic
	partition and it is a well-studied graph parameter with
	applications in the design of wireless networks.
	In this note, we consider the fractional counterpart of the
	connected domatic partition problem in \emph{node-capacitated}
	graphs. Let $n$ be the number of nodes in the graph and let $k$
	be the minimum capacity of a node separator in $G$. Fractionally
	we can pack at most $k$ connected dominating sets subject to the
	capacities on the nodes, and our algorithms construct packings
	whose sizes are proportional to $k$. Some of our main
	contributions are the following:
	\begin{itemize}
		\item An algorithm for constructing a fractional connected
		domatic packing of size $\Omega\left(k \right)$ for
		node-capacitated planar and minor-closed families of graphs.
		\item An algorithm for constructing a fractional connected
		domatic packing of size $\Omega\left(k / \ln{n} \right)$ for
		node-capacitated general graphs.
	\end{itemize}
\end{abstract}

\newpage
\setcounter{page}{1}
\pagenumbering{arabic}

\section{Introduction}
\label{sec:intro}

Let $G = (V, E)$ be an undirected and connected graph with $n$ nodes.
A set $S$ of nodes is a dominating set if every node not in $S$ has a
neighbor in $S$. A connected domatic partition is a collection of
connected dominating sets that are node disjoint. The connected
domatic number is the size of a largest connected domatic partition.
In this note, we consider the problem of constructing large
fractional connected domatic packings; a fractional packing is a
weight function on connected dominating sets such that, for each
vertex $v$, the total weight of the connected dominating sets that
contain $v$ is at most one.

Connected domatic partitions and packings have several applications
in the design of wireless networks. In these applications, a
connected dominating set is used as a virtual backbone, and the rest
of the nodes use the connected dominating set to exchange messages
and route traffic \cite{DasB97,DasSB97,MahjoubM10}.  Motivated by the
goal of improving the energy efficiency and the lifetime of the
network, several papers \cite{MisraM09,MoscibrodaW05,PemmarajuP06}
have proposed using several connected dominating sets; these
approaches first compute a large connected domatic packing or
partition and they rotate between the connected dominating sets.
Additionally, the recent work of Censor-Hillel \etal \cite{CHGK}
establishes a close connection between the fractional connected
domatic number and the throughput of store-and-forward algorithms for
routing in wireless networks.

Integer and fractional packings of combinatorial structures are
connected to each other and to the corresponding optimization problem
that asks for the minimum cost combinatorial structure; we refer the
reader to Section~{5} in \cite{CalinescuCV09} for an overview of
these connections. In particular, an $\alpha$-approximation for the
minimum-cost \prob{Connected Dominating Set} (\minCDS) problem implies an
$\alpha$-approximation for the \prob{Connected Domatic Packing}
(\CDSpack) problem; this connection was shown by
Carr and Vempala \cite{CarrV02}.  This result and the $O(\ln{n})$
approximation algorithm for \minCDS given by Guha and Khuller
\cite{GuhaK98} imply an $O(\ln{n})$ approximation for \CDSpack. In
very recent work, Censor-Hillel \etal \cite{CHGK} gave the first
poly-logarithmic approximation for the \prob{Connected Domatic
Partition} problem; their algorithm achieves
an $O(\ln^5{n})$ approximation. The results of \cite{CHGK} guarantee
partitions and packings whose sizes are a poly-logarithmic fraction
of the \emph{vertex connectivity}\footnote{A graph $G = (V, E)$ is
$k$-vertex-connected iff, for any subset $S \subseteq V $ of size
less than $k$, the removal of $S$ does not disconnect the graph. The
vertex connectivity of $G$ is the maximum $k$ such that $G$ is
$k$-vertex-connected.}. These guarantees are independent of the size
of the largest partition or packing and thus they are not
approximation results \emph{per se}.  Since the connectivity of the
graph is an upper bound on the fractional connected domatic number
and thus the connected domatic number as well, these absolute results
give us approximation guarantees as a byproduct.

In several applications in wireless networks, each node has a certain
battery life that constrains how long the node can be used as part of
a virtual backbone for the network. We can model such networks using
node-capacitated graphs, where the capacity represents the battery
life of the node. Motivated in part by these applications, we
consider the more general problem of constructing large connected
packings in node-capacitated graphs. In this setting, each vertex $v$
has a capacity $\capacity(v)$ and the goal is to find a fractional
packing of maximum total weight such that the fractional weight of
the connected dominating sets that contain each vertex is at most the
capacity of the vertex. We refer to the capacitated analogue of
\CDSpack as \capCDSpack.  We can reduce the capacitated problem to
the uncapacitated one by replacing each node by a clique whose size
is equal to the capacity of the node. However, this reduction does
not run in polynomial time if the capacities are large and it does
not preserve the special structure of certain graphs, such as planar
or minor-free graphs.  Real-world wireless networks are typically not
arbitrary graphs but rather they are nearly planar or have restricted
structure. We give an algorithm that constructs improved fractional
packings for such networks.

\begin{theorem} \label{thm:cds-packing-minor-free}
	Let $G$ be a node-capacitated graph that belongs to a
	minor-closed family $\sG$ of graphs. Let $k$ be the minimum
	capacity of a node separator\footnote{A set $S$ is a node
	separator in $G$ if the graph $G-S$ has at least two connected
	components, where $G-S$ is the graph obtained from $G$ by
	removing the nodes of $S$} in $G$. There is a polynomial time
	algorithm that constructs a fractional connected domatic packing
	in $G$ of size $\Omega(k)$, where the constant depends only on
	the family $\sG$.
\end{theorem}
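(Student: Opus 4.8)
\medskip

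The plan is to analyze the fractional \capCDSpack linear program through LP duality. Its dual (covering) LP asks for vertex weights $y \ge 0$ satisfying $\sum_{v \in D} y_v \ge 1$ for every connected dominating set $D$, minimizing $\sum_v \capacity(v)\, y_v$; by duality its optimum $\opt_{\mathrm{LP}}$ equals the maximum fractional packing value, and it is bounded above by $k$. This upper bound is the easy observation underlying the whole problem: if $S$ is a minimum-capacity separator then $y = \mathbf{1}_S$ is dual-feasible, since any connected dominating set disjoint from $S$ would lie in a single component of $G - S$ and hence could not dominate the other components, and the cost of this $y$ is $\capacity(S) = k$. The theorem therefore amounts to constructing a packing that matches this bound up to a constant, and my plan has two halves: showing $\opt_{\mathrm{LP}} = \Omega(k)$, and turning that value into an explicit packing in polynomial time.

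The crux is the matching lower bound $\opt_{\mathrm{LP}} = \Omega(k)$, which I would prove by rounding an optimal fractional cover $y$ to a genuine node separator of comparable capacity. Let $C = \sum_v \capacity(v)\, y_v$; it suffices to produce a separator $S$ with $\capacity(S) = O(C)$, as this forces $k \le \capacity(S) = O(C)$ and hence $C = \Omega(k)$. Because the minimum separator is the minimum over non-adjacent pairs $s,t$ of the minimum $s$-$t$ vertex cut, and the $s$-$t$ vertex-cut LP is integral by Menger's theorem, it is enough to extract from $y$ a fractional $s$-$t$ vertex cut of cost $O(C)$ for some non-adjacent pair. The tool I would use is region growing: treating $y_v$ as a length on each vertex, grow a ball around a root and take the shell at a well-chosen radius as the separator, charging its capacity against $C$ by averaging over a continuum of radii. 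I expect this step to be the main obstacle, precisely because a connected-dominating-set cover couples two requirements — connectivity and domination — whereas a separator only needs to destroy connectivity. The domination constraint forces one to track the neighborhoods of the would-be terminals when passing from ``every CDS meets $S$'' to a per-path cut condition, and these boundary terms must be absorbed with only constant loss; controlling the radius at which the growing ball first becomes dominating, so that its exterior is still non-empty and the shell is a true separator, is where the real work lies.

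For the algorithmic half I would invoke the Carr--Vempala reduction quoted in the introduction: an $\alpha$-approximation for node-weighted \minCDS yields, in polynomial time, a fractional \capCDSpack of size at least $\opt_{\mathrm{LP}}/\alpha$. It remains to supply $\alpha = O(1)$ on the family $\sG$, and this is the one place minor-closedness is essential. On a minor-closed family a Baker/Grohe-style layering applies: deleting a sparse, periodic set of BFS layers decomposes $G$ into pieces of bounded treewidth, on each of which node-weighted \minCDS is solved exactly by dynamic programming and the solutions recombined with only constant overhead, giving a constant-factor approximation whose constant depends only on $\sG$. Combining $\alpha = O(1)$ with $\opt_{\mathrm{LP}} = \Omega(k)$ produces a fractional connected domatic packing of size $\Omega(k)$ in polynomial time; the same scheme with the Guha--Khuller $O(\ln n)$ oracle in place of the layering recovers the $\Omega(k/\ln n)$ bound for general graphs.
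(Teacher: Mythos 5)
There are two genuine gaps here, and together they leave the quantitative heart of the theorem unproven. First, your plan hinges on showing $\opt_{\mathrm{LP}} = \Omega(k)$ by region-growing an optimal fractional cover of all connected dominating sets into a low-capacity separator; you correctly identify this as the crux and then do not carry it out, only describing why the domination constraint makes it hard. The paper avoids this step entirely, and that avoidance is the main idea you are missing: instead of lower-bounding the packing LP optimum, it exhibits the explicit point $x(v) = \capacity(v)/k$ and checks that it is feasible for the cut-based relaxation \minCDSlp (immediate, since every neighborhood $\Gamma(v)$ and every $\Gamma(S)$ for $S \in \sS$ is a node separator and hence has capacity at least $k$). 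The Carr--Vempala decomposition applied to \emph{this} fractional point, with an $r$-approximate rounding algorithm, directly produces dominating sets whose per-vertex load is at most $r\, x(v) = r\,\capacity(v)/k$; rescaling by $k/r$ gives a capacity-feasible packing of size $k/r$. No rounding of covers to separators is ever needed.

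Second, your proposed oracle --- a Baker/Grohe layering giving a constant-factor approximation for node-weighted \minCDS on minor-closed families --- supplies the wrong kind of guarantee. The Carr--Vempala theorem (as the paper states it) requires a \emph{rounding algorithm for \minCDSlp} whose output costs at most $r$ times the value of the given fractional solution, i.e., an integrality-gap bound, not an approximation relative to the integral optimum. The paper makes exactly this point about Baker's PTAS for \minDS in planar graphs: it "does not give an upper bound on the integrality gap of the LP." (Layering is also problematic for the connectivity requirement and for node weights, but the LP-relative issue is the disqualifying one.) The paper instead builds the $O(1)$ LP-relative rounding for \minCDSlp by composing an LP-relative $O(1)$ rounding for \minDSlp (a new primal-dual algorithm exploiting sparsity of minor-free graphs, Section~\ref{sec:min-cost-ds-planar}) with the known $O(1)$ integrality gap of the node-weighted Steiner tree relaxation on minor-closed families due to Demaine et al. To repair your argument you would need to replace the layering step with an integrality-gap bound of this kind, at which point the region-growing step becomes unnecessary.
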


\noindent
Our approach can also be used to construct fractional packings for
general graphs with arbitrary node capacities. This result was
shown in \cite{CHGK} for uncapacitated graphs using very different
techniques.

\begin{theorem} \label{thm:cds-packing-general}
	Let $G$ be a node-capacitated graph. Let $k$ be the minimum
	capacity of a node separator in $G$. There is a polynomial time
	algorithm that constructs a fractional connected domatic packing
	in $G$ of size $\Omega(k / \ln{n})$.
\end{theorem}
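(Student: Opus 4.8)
The plan is to reduce the construction to an approximation algorithm for the minimum node-weighted connected dominating set problem, and to separately establish that the fractional connected domatic number is already $\Omega(k)$. Let $\mathcal{C}$ denote the family of connected dominating sets of $G$, and let $\rho_f(G)$ be the optimum of the packing LP
\[ \max \sum_{C \in \mathcal{C}} y_C \quad \text{s.t.} \quad \sum_{C \ni v} y_C \le \capacity(v)\ \ (\forall v \in V),\ \ y \ge 0. \]
Its dual asks for nonnegative node weights $x$ of minimum capacity-weighted value $\sum_v \capacity(v) x_v$ subject to $\sum_{v \in C} x_v \ge 1$ for every $C \in \mathcal{C}$, so the dual separation problem is exactly \minCDS with node weights $x$. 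Using the $O(\ln n)$-approximation of Guha and Khuller \cite{GuhaK98} as an approximate separation oracle inside the Carr--Vempala framework \cite{CarrV02} (equivalently, a Garg--Koenemann-style multiplicative-weights scheme), I would construct in polynomial time a feasible fractional packing whose value is within an $O(\ln n)$ factor of $\rho_f(G)$. This is the routine, algorithmic half.

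The crux is the structural lower bound $\rho_f(G) = \Omega(k)$. By LP duality it suffices to show that every fractional CDS-transversal $x$ has $\sum_v \capacity(v) x_v = \Omega(k)$. The guiding observation is that if $S \subseteq V$ has $\capacity(S) < k$, then $S$ is not a node separator, so $G - S$ is connected; moreover no vertex can have all of its neighbors inside $S$ (otherwise $N(v)$ would be a separator of capacity below $k$), so $V \setminus S$ dominates $G$ and is therefore a connected dominating set. I would apply this to the superlevel sets $S_t = \set{v : x_v \ge t}$: whenever $\capacity(S_t) < k$, the complement $\set{v : x_v < t}$ is a connected dominating set and hence has $x$-weight at least $1$. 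Integrating the identity $\sum_v \capacity(v) x_v = \int_0^\infty \capacity(S_t)\,dt$ up to the threshold where $\capacity(S_t)$ crosses $k$ would then yield the desired $\Omega(k)$ bound.

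The main obstacle is precisely this weighting argument. The naive threshold bound is weak because the $x$-weight of a connected dominating set may be spread over many low-value vertices, so a level set whose complement is a connected dominating set need not itself have small $x$-weight; the threshold at which $\capacity(S_t)$ drops below $k$ could be as small as $1/n$. Overcoming this requires choosing the threshold carefully---for instance a ball-growing or averaging argument that exhibits a single level $t = \Omega(1)$ at which the complement is a connected dominating set, or a scaling argument that directly rules out cheap fractional transversals by bounding the gap between fractional CDS-transversals and genuine separators. Once $\rho_f(G) = \Omega(k)$ is established, combining it with the algorithmic half gives a fractional packing of size $\Omega(k/\ln n)$; the same scheme, run with the constant-factor node-weighted approximations available for minor-closed families, recovers Theorem~\ref{thm:cds-packing-minor-free}.
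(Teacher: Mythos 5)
There is a genuine gap, and it sits exactly where you flagged it: the structural claim that every fractional CDS-transversal has capacity-weighted value $\Omega(k)$ (equivalently, that the fractional connected domatic number is $\Omega(k)$) is not only unproven in your write-up --- it is false in general. Censor-Hillel et al.\ \cite{CHGK} exhibit $k$-vertex-connected graphs in which every connected dominating set is so large that even the \emph{fractional} connected domatic number is $O(k/\log n)$; the $\log n$ loss in the theorem is intrinsic, not merely the price of an approximate separation oracle. Consequently your two-stage plan --- first show the true packing LP has optimum $\Omega(k)$, then lose $O(\ln n)$ algorithmically --- cannot be completed: no ball-growing or averaging refinement of the level-set argument will produce a threshold $t = \Omega(1)$, because for these graphs the dual optimum genuinely sits at $\Theta(k/\log n)$. (Your own observation that the threshold where $\capacity(S_t)$ crosses $k$ can be as small as $1/n$ is the symptom of this.) Had the structural bound been true, your scheme would also only yield $\Omega(k/\ln n)$ after the algorithmic loss, so there is no slack to absorb a weaker structural bound of $\Omega(k/\ln n)$ without degrading to $\Omega(k/\ln^2 n)$.

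The paper escapes this by never dualizing against the exact convex hull of connected dominating sets. Instead it works with the relaxation \minCDSlp, whose constraints are indexed only by closed neighborhoods $\Gamma^+(v)$ and by separators $\Gamma(S)$ for $S \in \sS$ --- sets that by the very definition of $k$ have capacity at least $k$. Hence the point $x(v) = \capacity(v)/k$ is \emph{trivially} feasible for \minCDSlp, with no structural lemma needed. The Carr--Vempala theorem (Theorem~\ref{thm:convex-decomp}) then converts an $r$-approximate \emph{rounding} algorithm for this relaxation into a convex combination of connected dominating sets dominated coordinatewise by $r \cdot \vx$; scaling by $k/r$ gives the packing of size $k/r$ (Corollary~\ref{cor:cds-pack-decomp}). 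The single factor $r$ simultaneously absorbs the integrality gap of the relaxation and the algorithmic loss, and for general graphs $r = O(\ln n)$ is obtained by rounding the domination constraints via randomized rounding and the connectivity constraints via the node-weighted Steiner tree LP of Guha et al. If you want to salvage your framing, replace ``fractional CDS-transversal'' by ``feasible solution to \minCDSlp'' throughout; the hard structural step then evaporates.
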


\noindent
Our algorithm for \capCDSpack is based on a connection between the
size of a fractional packing and the integrality gap of a standard LP
relaxation for \minCDS; we describe this LP relaxation in
Section~\ref{sec:cds-packing}. We show that, if the relaxation has an
integrality gap of $r$, we can construct a packing of size $k/r$ in
polynomial time using an $r$-approximate rounding algorithm for the
\minCDS LP and the ellipsoid method. One of our contributions is a
constant upper bound on the integrality gap of \minCDS LP in
minor-closed families of graphs, where the constant depends only on
the family.  In the process, we also show that the integrality gap of
a standard LP relaxation for the minimum cost \prob{Dominating Set}
(\minDS) problem is constant in minor-free graphs. The \minDS problem
admits a $\PTAS$ in planar graphs \cite{Baker94}, but this result
does not establish an upper bound on the integrality gap. Our
algorithms can be easily adapted to give analogous integrality gap
upper bounds for the Steiner variants of \minDS and \minCDS; in the
Steiner problems, we are given a subset of the vertices called the
terminals and the goal is to select a (connected) set that dominates
the terminals.

\begin{theorem} \label{thm:ds-gap-minor-free}
	The standard LP relaxation for the \minDS problem has an $O(1)$
	integrality gap in planar and minor-closed families of graphs.
	Moreover, there is a polynomial time algorithm that rounds any
	fractional solution to an integral solution whose cost is at most
	$O(1)$ times larger than the cost of the fractional solution.
\end{theorem}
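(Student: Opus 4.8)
The plan is to round an \emph{arbitrary} feasible solution $\vx$ of the standard covering LP
\[
\min \sum_v \cost(v)\, x_v \quad\text{s.t.}\quad \sum_{u \in N[v]} x_u \ge 1 \ \ \forall v,\qquad x \ge 0,
\]
(where $N[v]$ is the closed neighborhood of $v$) into an integral dominating set of cost $O\!\left(\sum_v \cost(v) x_v\right)$; since the LP optimum lower-bounds $\opt$, this yields the claimed $O(1)$ integrality gap. The main idea is to reduce the problem, via a Baker-style decomposition, to rounding the dominating-set LP on graphs of bounded treewidth, and then to establish that the latter gap is $O(1)$.

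First I would decompose $G$ into overlapping low-width slabs. For planar $G$, take a BFS layering $L_0, L_1, \dots$ and, for a shift chosen from a period of $\Theta(k)$, let each slab consist of $\Theta(k)$ consecutive layers, arranged so that consecutive slabs share a margin of width at least one and so that the \emph{cores} (the central layers, minus the margins) partition $V$. Each slab induces a subgraph of treewidth $O(k)$ by the standard property of consecutive BFS layers of planar graphs, and for a minor-closed family the same is achieved by the minor-closed analogue of Baker's technique \cite{Baker94} (locally-bounded-treewidth / contraction decompositions), with the width bounded by a constant depending only on $\sG$. The key locality observation is that every core vertex $v$ has $N[v]$ entirely inside its slab, so (i) the restriction of $\vx$ to a slab is feasible for the subproblem ``dominate the core using slab vertices,'' and (ii) any union of per-slab solutions that dominate their cores dominates all of $V$.

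On each slab I would round the restricted LP to an integral set $D_j$ inside the slab that dominates the core, at cost $O(1)\cdot \sum_{v \in S_j}\cost(v) x_v$; this is where the bounded-treewidth gap bound is invoked. Since each vertex lies in $O(1)$ slabs (cores partition $V$ and margins overlap only neighboring slabs), summing gives $\cost\bigl(\bigcup_j D_j\bigr) = O\!\left(\sum_v \cost(v) x_v\right)$, as required. The same argument handles the Steiner variant verbatim, imposing the covering constraint only for terminals in each core.

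The main obstacle is the bounded-treewidth subclaim: the dominating-set LP has integrality gap $O(1)$ on graphs of treewidth $w = O(k)$, together with a polynomial-time rounding achieving it. I would prove this by an LP-guided dynamic program over a tree decomposition: process bags bottom-up, and use the fractional values to decide, within each bag, which vertices to buy and which domination responsibilities to defer to ancestors, charging the integral cost incurred against the fractional mass $\sum_v \cost(v) x_v$ present in the corresponding subtree. Bounding this charge by a function of $w$ (a constant here) is the crux of the whole argument; a secondary technical point, most delicate for general (non-apex) minor-free families, is arranging the decomposition so that domination across slab boundaries is handled without paying the full integral cost of the deleted vertices.
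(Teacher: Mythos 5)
Your plan has two genuine gaps, and the second one is fatal for the minor-closed case as you have set it up. First, the entire weight of the argument rests on the subclaim that the dominating-set LP has integrality gap bounded by a function of the treewidth, with a constructive rounding; you correctly identify this as ``the crux'' but offer only a one-sentence sketch of an LP-guided dynamic program, with no account of how the deferred ``domination responsibilities'' are charged to fractional mass. The subclaim is in fact true, but the cleanest reason is that treewidth-$w$ graphs have at most $w\,n$ edges in every subgraph --- i.e., sparsity --- and once you invoke sparsity you no longer need the slab decomposition at all. Second, and more seriously: for a general minor-closed family $\sG$, BFS layering does \emph{not} yield bounded-treewidth slabs. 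Apex graphs (a planar graph plus one universal vertex) form a minor-closed family in which every vertex is within distance two of the apex, so a single slab is the whole graph. The known Baker-type decompositions for $H$-minor-free graphs delete or contract a vertex/edge class to reach bounded treewidth, and deleting vertices destroys exactly the locality you need: a core vertex $v$ whose fractional domination mass $\sum_{u\in\Gamma^+(v)}x(u)$ sits on deleted vertices cannot be dominated within its slab at bounded cost. You flag this as ``a secondary technical point,'' but it is the main obstruction, and it is presumably why the paper itself remarks that Baker's PTAS for \minDS in planar graphs does not yield an integrality-gap bound.

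The paper takes a different and more elementary route that sidesteps both issues: a Goemans--Williamson primal-dual algorithm on the whole graph (raise duals $y(a)$ uniformly on the currently undominated vertices $A_i$, add tight vertices, then reverse-delete), analyzed purely via the edge-density bound $\card{E(K)}\le c'(\sG)\card{V(K)}$ for subgraphs of minor-free graphs together with a witness argument showing $\card{W_i}\le\card{A_i}$ for the surviving vertices. The only property of $\sG$ used is sparsity of all subgraphs, which holds uniformly for every minor-closed family with no case distinction between planar, apex-minor-free, and general $H$-minor-free graphs, and it directly produces a feasible dual certifying the integrality gap. If you want to salvage your approach, the honest fix is to prove the bounded-treewidth (or, better, bounded-degeneracy) gap bound by exactly such a primal-dual or local-charging argument --- at which point the slab decomposition becomes redundant.
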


\begin{theorem} \label{thm:cds-gap-minor-free}
	The standard LP relaxation for the \minCDS problem has an $O(1)$
	integrality gap in planar and minor-closed families of graphs.
	Moreover, there is a polynomial time algorithm that rounds any
	fractional solution to an integral solution whose cost is at most
	$O(1)$ times larger than the cost of the fractional solution.
\end{theorem}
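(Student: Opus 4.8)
The plan is to decompose the rounding into two tasks, each of which can be solved with only a constant-factor loss in minor-closed families: \emph{domination} and \emph{connectivity}. The standard \minCDS LP consists of the domination constraints (every closed neighborhood carries at least one unit of fractional mass) together with cut constraints that force the selected set to be connected. Given a feasible fractional solution $x$, I would first use the domination constraints alone to produce an integral dominating set, and then use the connectivity constraints to buy a cheap set of ``connector'' vertices that stitches the dominating set into a single connected subgraph. Since a connected dominating set is exactly a connected subgraph that happens to dominate, this two-phase strategy is lossless up to constant factors, provided each phase loses only a constant factor.

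For the first phase, observe that the domination constraints of the \minCDS LP coincide with the constraints of the \minDS LP, so $x$ is feasible for the latter; applying the rounding algorithm of Theorem~\ref{thm:ds-gap-minor-free} yields an integral dominating set $D$ with $\cost(D) = O(1)\cdot \sum_v c_v x_v$ in polynomial time. For the second phase, I would form the node-weighted Steiner forest instance whose terminals are the connected components of $G[D]$, where each vertex $u$ has weight $c_u$ and the vertices of $D$ are free (they are already bought). The goal is to show that $O(1)\cdot x$, restricted to $V \setminus D$, is a feasible fractional solution to the cut LP of this instance, and then to round it using the fact that node-weighted Steiner tree and forest have constant integrality gap in planar and, more generally, minor-closed families (the primal--dual machinery of Demaine, Hajiaghayi, and Klein, with the constant depending only on the excluded minor). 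This produces a connector set $A$ with $\cost(A) = O(1)\cdot\sum_v c_v x_v$ such that $G[D \cup A]$ is connected, so the output $D \cup A$ is a connected dominating set of cost $O(1)\cdot\sum_v c_v x_v$, and both phases run in polynomial time.

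The main obstacle is the \emph{feasibility transfer} in the second phase: the connectivity constraints of the \minCDS LP were written with respect to the fractional solution $x$, whereas after the first phase we must connect the specific, integrally chosen components of $G[D]$. The point to establish is that every node-separator $W \subseteq V\setminus D$ whose removal places two components of $G[D]$ in distinct components of $G - W$ must carry $x$-mass bounded below by a constant, so that a constant scaling of $x$ certifies feasibility for the Steiner-forest cut LP. This is where both families of constraints are used together: the cut constraints of the \minCDS LP guarantee that any separator of the fractionally selected set is crossed by at least one unit of mass, while the domination constraints guarantee that the ``active'' region surrounding each component of $D$ cannot be cheaply isolated, ensuring that the relevant cuts are charged. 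I expect the transfer to reduce to a careful matching of the minCDS cut constraints against the Steiner-forest cuts for $G[D]$. Finally, the Steiner variants follow with no extra work: taking the terminal set of the Steiner-forest instance to be exactly the vertices that must be dominated, rather than all of $V$, yields the analogous constant integrality-gap bound for the Steiner version of \minCDS, matching the remark preceding the statement.
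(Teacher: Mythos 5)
Your route is the same as the paper's: round the domination constraints with the $O(1)$-approximate \minDSlp rounding of Theorem~\ref{thm:ds-gap-minor-free} to obtain a dominating set $D$, then connect $D$ via a node-weighted Steiner instance in which $D$ is free, using the constant integrality gap of \nwSTlp in minor-closed families due to Demaine, Hajiaghayi, and Klein; each phase costs $O(1)\cdot\sum_v \cost(v)x(v)$, and whether you phrase the second phase as Steiner tree on terminal set $D$ or Steiner forest on the components of $G[D]$ is immaterial. However, the step you explicitly leave open --- the ``feasibility transfer'' --- is the one lemma this argument actually requires (it is Proposition~\ref{prop:feasible-steiner-tree} in the paper), and the mechanism you sketch for it is not the right one: the domination constraints play no role in this step, and there is no need to argue that the ``active region'' around a component of $D$ cannot be cheaply isolated.

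The correct, and much shorter, argument is a two-case analysis. Let $S$ be any set separating two terminals. If $V - (S \cup \Gamma(S))$ is nonempty, then $S \in \sS$, so the connectivity constraint of \minCDSlp already gives $x(\Gamma(S)) \geq 1$ --- no constant scaling is needed. If instead $S \cup \Gamma(S) = V$, the \minCDSlp constraints say nothing about $S$, but since $S$ separates terminals, $\Gamma(S)$ must contain a terminal, i.e., a vertex of $D$; because the vertices of $D$ carry weight zero in the Steiner instance, one simply sets $x'(v) = 1$ for $v \in D$ and $x'(v) = x(v)$ otherwise, at no increase in fractional cost, and the cut is covered. Thus $x'$ is exactly feasible for \nwSTlp, your separator claim holds with constant $1$ rather than merely $\Omega(1)$, and the rest of your two-phase argument goes through verbatim, yielding an $(r_1 + r_2)$-approximate rounding and hence an $O(1)$ integrality gap in minor-closed families. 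Your closing observation about the Steiner variant is consistent with the paper's remarks, with the same fix applied.
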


\vspace{-0.1in}
\mypar{Other related work:}
Domatic partitions have received considerable attention; we refer the
reader to \cite{HedetniemiL91, HaynesHS98a, HaynesHS98b} for a
comprehensive treatment of graph domination. Feige \etal
\cite{FeigeHKS02} gave a polynomial time algorithm that constructs a
domatic partition of size $\Omega(\delta/\ln{n})$, where $\delta$ is
the minimum degree of the graph and they showed that this is best
possible unless $\NP \subseteq \DTIME\left(n^{\log\log{n}}\right)$.
C\u{a}linescu \etal \cite{CalinescuCV09} considered the more general
problem of packing disjoint bases in a polymatroid.

\section{Algorithm for fractional connected domatic packings}
\label{sec:cds-packing}

In this section, we give polynomial time algorithms for constructing
fractional connected domatic packings in node-capacitated graphs.

We start by introducing the following natural LP relaxation for the
\minCDS problem. Let $G = (V, E)$ be a graph with costs $\cost(v)$
associated with the nodes. For each vertex $v$, we let $\Gamma(v)$
denote the set of all neighbors of $v$ in $G$. For a set $S$ of
nodes, we let $\Gamma(S)$ denote the set of all nodes $v$ such that
$v$ is not in $S$ and $v$ has a neighbor in $S$.  Let $\Gamma^+(v) =
\Gamma(v) \cup \set{v}$. The relaxation has a variable $x(v)$ for
each vertex $v$ with the interpretation that $x(v) = 1$ iff $v$ is in
the connected dominating set. Let $\sS$ be the collection of all sets
$S$ such that $S$, $\Gamma(S)$, and $V - (S \cup \Gamma(S))$ are all
non-empty; note that, for each set $S \in \sS$, the set $\Gamma(S)$
is a node separator that separates $S$ from $V - (S \cup \Gamma(S))$.
The relaxation \minCDSlp is given below.

\begin{center}
\begin{boxedminipage}{0.45\textwidth}
\vspace{-0.15in}
\begin{align*}
	&  \minCDSlp &\\
	\quad \min \quad & \sum_{v \in V} x(v) \cost(v) &\\
	\text{s.t.} \quad & \sum_{ u \in \Gamma^+(v)} x(u) \geq 1 \quad &
	v \in V\\
	& \sum_{v \in \Gamma(S)} x(v) \geq 1 \qquad\quad & S \in \sS\\
	& x(v) \geq 0 & v\in V
\end{align*}
\end{boxedminipage}
\end{center}

Note that the LP is a valid relaxation for the \minCDS problem. A
dominating set must contain a vertex from $\Gamma^+(v)$ for each
vertex $v$. Additionally, a connected dominating set must contain a
vertex from each node separator.

The two main steps of our approach for constructing large fractional
packings are the following. The first step is to show that we can
construct in polynomial time a packing of size $\Omega(k/r)$, where
$k$ is the capacity of a minimum node separator in $G$ and $r$ is an
upper bound on the integrality gap of \minCDSlp; we refer the reader
to Corollary~\ref{cor:cds-pack-decomp} for a precise statement of the
result.  The second step is to upper bound the integrality gap of
\minCDSlp. For general graphs, it follows easily from previous work
that the integrality gap is $O(\ln{n})$ and thus we can find a
packing of size $\Omega(k/\ln{n})$. For planar graphs and more
generally, minor-free graphs, we will show that the integrality gap
of \minCDSlp is a constant and thus we can find a packing of size
$\Omega(k)$.

In the following, we say that a rounding algorithm $\sA$ for an LP
relaxation is an \emph{$r$-approximate} rounding algorithm for the
LP if, given any fractional solution to the LP, the algorithm
constructs an integral solution of value at most $r$ times the value
of the fractional solution.

Consider an instance $\pair{G, \capacity}$ of \capCDSpack, where
$G$ is a graph from a family $\sG$ of graphs and $\capacity(\cdotp)$
is a capacity function on the nodes of $G$. Let $k$ be the minimum
capacity of a node separator in $G$. Our goal is to show that we can
construct a fractional packing of size $\Omega(k/r)$ provided that we
have an $r$-approximate rounding algorithm for \minCDSlp.  This will
follow from the theorem below, which is an immediate corollary of
Theorem~{2} in \cite{CarrV02}.

\begin{theorem}[Carr and Vempala \cite{CarrV02}] \label{thm:convex-decomp}
	Let $\sG$ be a family of graphs. Let $\vx$ be a fractional
	solution to \minCDSlp for an instance of \minCDS for which the
	graph $G$ is in $\sG$. Let $\sA$ be a polynomial time rounding
	algorithm for \minCDSlp that is $r$-approximate on instances for
	which the graph is in $\sG$. Given $\vx$ and $\sA$, we can find
	in polynomial time a collection of polynomially many connected
	dominating sets $D_1, \dots, D_{\ell}$ with associated weights
	$\lambda_1, \dots, \lambda_{\ell}$ such that $\sum_{i = 1}^{\ell}
	\lambda_i = 1$ and, for each vertex $v$, we have $\sum_{i: v \in
	D_i} \lambda_i \leq r \cdotp x(v)$.
\end{theorem}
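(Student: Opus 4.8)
The plan is to recast the existence of the desired decomposition as an LP feasibility question and to resolve it via LP duality, using the rounding algorithm $\sA$ as an approximate separation oracle inside the ellipsoid method. Concretely, I want to find weights $\lambda_i \geq 0$ summing to one, supported on connected dominating sets $D_i$, such that the convex combination $\sum_i \lambda_i \chi_{D_i}$ is dominated coordinatewise by the scaled vector $r\vx$; here $\chi_D \in \set{0,1}^V$ denotes the indicator vector of $D$. If such weights exist, the two properties claimed in the theorem ($\sum_i \lambda_i = 1$ and $\sum_{i: v \in D_i}\lambda_i \leq r\, x(v)$ for all $v$) hold by construction, so the entire task reduces to producing this convex combination.

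First I would establish existence via duality. Suppose no such convex combination exists. Writing the feasibility system as a linear program over the (exponentially many) variables $\lambda_i$ and applying Farkas' lemma, there is a vector $\vy \in \mathbb{R}^V$ with $\vy \geq 0$ and a scalar $z$ such that $\sum_{v \in D} y(v) \geq -z$ for every connected dominating set $D$, while $r \sum_{v \in V} y(v)\, x(v) < -z$. Interpreting $\vy$ as a cost function on the nodes, the first inequality says that the minimum-cost connected dominating set under costs $\vy$ has cost at least $-z$. On the other hand, $\vx$ is a feasible fractional solution to \minCDSlp, so its cost under $\vy$ is $\sum_v y(v) x(v)$; applying the $r$-approximate rounding algorithm $\sA$ to $\vx$ produces an integral connected dominating set $D$ of cost at most $r \sum_v y(v) x(v) < -z$, contradicting the lower bound. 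Hence the convex combination exists.

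To make the construction run in polynomial time, I would invoke the ellipsoid method. The separation problem for the dual region --- given candidate costs $\vy$ and scalar $z$, decide whether some connected dominating set $D$ violates $\sum_{v\in D} y(v) \geq -z$ --- is solved approximately by running $\sA$ on $\vx$ with costs $\vy$: if $\sA$ returns a set of cost below $-z$ we have a violated constraint, and we add the corresponding connected dominating set to our collection; otherwise the approximation guarantee certifies that $r\vx$ lies in the dominant of the convex hull of the vectors $\chi_D$. Each iteration produces at most one new connected dominating set, and since the ellipsoid method terminates in a polynomial number of iterations, the support $D_1, \dots, D_\ell$ is polynomially large; solving a final LP over these sets yields the weights $\lambda_i$.

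The main obstacle is the careful coupling between the $r$-scaling and the approximation factor: the argument only goes through because the same factor $r$ appears both in the guarantee of $\sA$ and in the scaling of $\vx$, so that an approximate (rather than exact) separation oracle still certifies membership of $r\vx$ in the dominant polytope. Ensuring that the ellipsoid method remains correct and polynomial when driven by such an approximate oracle --- and that the resulting support has polynomial size --- is the technical crux; this is precisely the content of Theorem~2 of \cite{CarrV02}, which I would cite to handle the remaining bookkeeping.
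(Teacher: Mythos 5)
Your proposal is correct and is exactly the standard Carr--Vempala argument (Farkas/LP duality to prove existence of the convex decomposition dominated by $r\vx$, with the $r$-approximate rounding algorithm serving as an approximate separation oracle for the dual inside the ellipsoid method, followed by solving the primal LP restricted to the polynomially many generated columns). The paper does not reprove this statement --- it cites it as an immediate corollary of Theorem~2 of \cite{CarrV02} --- and your sketch faithfully reconstructs the proof of that cited result, including the key point that the factor $r$ in the scaling of $\vx$ is what lets an approximate oracle certify membership.
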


\noindent
The theorem above gives us the following corollary.

\begin{corollary} \label{cor:cds-pack-decomp}
	Let $\sG$ be a family of graphs. Let $\sA$ be a polynomial time
	rounding algorithm for \minCDSlp that is $r$-approximate on
	instances for which the graph is in $\sG$.  Let $\pair{G,
	\capacity}$ be an instance of \capCDSpack such that $G \in \sG$.
	Let $k$ be the minimum capacity of any node separator in $G$.
	Given $\sA$ and $\pair{G, \capacity}$, we can find in polynomial
	time a collection of polynomially many connected dominating sets
	$D_1, \dots, D_{\ell}$ and associated weights $\alpha_1, \dots,
	\alpha_{\ell}$ such that $\sum_{i = 1}^{\ell} \alpha_i \geq k/r$
	and, for each vertex $v \in V(G)$, we have $\sum_{i: v \in D_i}
	\alpha_i \leq \capacity(v)$. Differently said, $\set{\pair{D_1,
	\alpha_1}, \dots, \pair{D_{\ell}, \alpha_{\ell}}}$ is a feasible
	fractional connected domatic packing of size $\Omega(k/r)$.
\end{corollary}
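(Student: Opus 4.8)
The plan is to exhibit a single explicit feasible solution to \minCDSlp whose value at each vertex is controlled by the capacity of that vertex, and then to invoke Theorem~\ref{thm:convex-decomp} and turn the resulting convex decomposition into a packing by one global rescaling.

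First I would take the candidate fractional solution $x(v) = \capacity(v)/k$ for every $v \in V(G)$ and check that it is feasible for \minCDSlp. The separator constraints are immediate: the paper already observes that $\Gamma(S)$ is a node separator for each $S \in \sS$, so by the definition of $k$ we have $\sum_{v \in \Gamma(S)} \capacity(v) \geq k$, and dividing by $k$ gives $\sum_{v \in \Gamma(S)} x(v) \geq 1$. For the domination constraints I would show that $\capacity(\Gamma^+(v)) \geq k$ for every $v$. If $\Gamma^+(v) \neq V$, then deleting $\Gamma(v)$ isolates $v$ from the nonempty set $V - \Gamma^+(v)$, so $\Gamma(v)$ is a separator and $\capacity(\Gamma(v)) \geq k$; otherwise $\Gamma^+(v) = V$ and $\capacity(\Gamma^+(v)) = \capacity(V) \geq k$ since every separator is a subset of $V$ of capacity at least $k$. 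In either case $\sum_{u \in \Gamma^+(v)} x(u) = \capacity(\Gamma^+(v))/k \geq 1$, and nonnegativity is clear, so $\vx$ is feasible.

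Next I would apply Theorem~\ref{thm:convex-decomp} to $\vx$ and the given $r$-approximate rounding algorithm $\sA$ (valid since $G \in \sG$ and $\sA$ is $r$-approximate on $\sG$) to obtain, in polynomial time, connected dominating sets $D_1, \dots, D_\ell$ with weights $\lambda_1, \dots, \lambda_\ell$ satisfying $\sum_i \lambda_i = 1$ and $\sum_{i : v \in D_i} \lambda_i \leq r \cdot x(v)$ for every $v$. Finally I would rescale by setting $\alpha_i = (k/r)\lambda_i$. Then $\sum_i \alpha_i = k/r$, and for each $v$ we have $\sum_{i : v \in D_i} \alpha_i = (k/r)\sum_{i : v \in D_i}\lambda_i \leq (k/r)\cdot r\cdot \capacity(v)/k = \capacity(v)$, which is exactly the capacity feasibility required of a fractional connected domatic packing; its size is $k/r = \Omega(k/r)$, as claimed.

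Every step---choosing $\vx$, running $\sA$ inside the Carr--Vempala procedure, and rescaling---is polynomial, so the whole construction is polynomial. The only genuinely delicate point is the feasibility of $x(v) = \capacity(v)/k$, and in particular handling the domination constraint for a vertex whose closed neighborhood is all of $V$; there I would lean on the fact that the existence of a separator forces $\capacity(V) \geq k$, tacitly assuming $G$ admits a separator so that $k$ is well defined. Everything else is bookkeeping.
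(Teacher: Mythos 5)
Your proof is correct and follows essentially the same route as the paper: the same candidate solution $x(v) = \capacity(v)/k$, the same feasibility check, the same invocation of Theorem~\ref{thm:convex-decomp}, and the same rescaling $\alpha_i = (k/r)\lambda_i$. Your handling of the case $\Gamma^+(v) = V$ is in fact slightly more careful than the paper's, which dismisses it as trivial without noting that it rests on $\capacity(V) \geq k$.
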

\begin{proof}
	Consider the following fractional solution $\vx$: $x(v) =
	\capacity(v)/k$ for each vertex $v \in V(G)$. We can verify that
	$\vx$ is a feasible solution to \minCDSlp as follows. Consider a
	vertex $v$. We can assume that $\Gamma(v)$ is a node separator;
	otherwise, $\Gamma^+(v)= V(G)$ and $\sum_{u\in
	\Gamma^+(v)}x(u)\geq 1$ trivially holds. Since $\Gamma(v)$ is a
	node separator, it follows that $\capacity(\Gamma(v)) \geq k$.
	Therefore we have
		$$\sum_{u \in \Gamma(v)} x(u) = {1 \over k} \sum_{u \in
		\Gamma(v)} \capacity(u) \geq 1,$$
	and thus $\vx$ satisfies the first set of constraints. Consider a
	set $S \in \sS$. Since $\Gamma(S)$ is a node separator in $G$ it
	follows that $\capacity(\Gamma(S)) \geq k$. Therefore we have
		$$\sum_{v \in \Gamma(S)} x(v) = {1 \over k} \sum_{v \in
		\Gamma(S)} \capacity(v) \geq 1,$$
	and thus $\vx$ satisfies the second set of constraints.

	We apply Theorem~\ref{thm:convex-decomp} to $\vx$ and $\sA$ in
	order to get a collection of connected dominating sets $D_1,
	\dots, D_{\ell}$ and associated weights $\lambda_1, \dots,
	\lambda_{\ell}$. For each $i$, let $\alpha_i = (k/r) \cdotp
	\lambda_i$. We can verify that $\set{\pair{D_1, \alpha_1}, \dots,
	\pair{D_{\ell}, \alpha_{\ell}}}$ is the desired packing as
	follows. We have
		$$\sum_{i = 1}^{\ell} \alpha_i = {k \over r} \sum_{i =
		1}^{\ell} \lambda_i = {k \over r}.$$
	Additionally, for each vertex $v$, we have
		$$\sum_{i: v \in D_i} \alpha_i = {k \over r} \sum_{i: v \in
		D_i} \lambda_i \leq {k \over r} \cdotp r \cdotp x(v) =
		\capacity(v).$$
\end{proof}

\medskip\noindent
In the second step, we upper bound the integrality gap of \minCDSlp.
To this end, we will first relate the integrality gap of \minCDSlp to
the integrality gaps of the standard LP relaxations for the
minimum-cost \prob{Dominating Set} (\minDS) problem and the minimum
node-weighted \prob{Steiner Tree} (\nwST) problem, and then we will
upper bound the integrality gaps of these two relaxations. The LP
relaxation for \minDS is the relaxation \minDSlp given below.  In the
\nwST problem, we are given a graph $G = (V, E)$ with non-negative
weights $w(v)$ on the nodes and a set $T \subseteq V$ of nodes called
\emph{terminals}. The goal is to select a minimum weight subgraph $H$
of $G$ that spans all the terminals, where the weight of $H$ is the
total weight of the nodes in $H$. (Note that we may assume that $H$
is a node-induced connected subgraph of $G$.) Let $\sS_T$ be the
collection consisting of all sets $S$ such that $S$ separates the
terminals; more precisely, $S \cap T$ and $(V - S) \cap T$ are both
non-empty.  For each set $S \in \sS_T$, at least one vertex in
$\Gamma(S)$ must be in the solution. The LP relaxation for \nwST is
the relaxation \nwSTlp given below.

\begin{center}
\begin{boxedminipage}{0.45\textwidth}
\vspace{-0.15in}
\begin{align*}
	&  \minDSlp &\\
	\quad \min \quad & \sum_{v \in V} x(v) \cost(v) &\\
	\text{s.t.} \quad & \sum_{ u \in \Gamma^+(v)} x(u) \geq 1 \quad &
	v \in V\\
	& x(v) \geq 0 & v\in V
\end{align*}
\end{boxedminipage}
\begin{boxedminipage}{0.45\textwidth}
\vspace{-0.15in}
\begin{align*}
	&  \nwSTlp &\\
	\quad \min \quad & \sum_{v \in V} x(v) w(v) &\\
	\text{s.t.} \quad & \sum_{v \in \Gamma(S)} x(v) \geq 1 \quad &
	S \in \sS_T\\
	& x(v) \geq 0 & v\in V
\end{align*}
\end{boxedminipage}
\end{center}

\noindent
The following straightforward propositions allow us to relate the
integrality gap of \minCDSlp to the integrality gaps of \minDSlp and
\nwSTlp.

\begin{prop} \label{prop:feasible-dominating-set}
	Consider an instance of \minCDS; let $G$ be the input graph and
	let $\vx$ be a feasible solution to \minCDSlp for this instance.
	Then $\vx$ is a feasible solution to \minDSlp for any instance of
	\minDS in which the input graph is $G$.
\end{prop}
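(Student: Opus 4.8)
The plan is to read off the result from the fact that \minDSlp is obtained from \minCDSlp simply by \emph{deleting} a family of constraints. Concretely, the constraint system defining the feasible region of \minDSlp consists of the domination constraints $\sum_{u \in \Gamma^+(v)} x(u) \ge 1$ (one for each vertex $v \in V$) together with the non-negativity constraints $x(v) \ge 0$; these are exactly the first and third families of constraints in \minCDSlp. The program \minCDSlp only adds the separator constraints $\sum_{v \in \Gamma(S)} x(v) \ge 1$ for $S \in \sS$, so its feasible region is contained in that of \minDSlp.

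First I would observe that the feasible region of \minDSlp depends only on the graph $G$, and not on the cost function: the costs $\cost(v)$ enter only through the objective, while every constraint is determined solely by the neighborhoods $\Gamma^+(\cdot)$ of $G$. Consequently there is a single feasible region to consider, and establishing feasibility of $\vx$ once suffices to justify the phrase ``for any instance of \minDS in which the input graph is $G$.''

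Then I would match the constraints directly. Since $\vx$ is feasible for \minCDSlp, it in particular satisfies $\sum_{u \in \Gamma^+(v)} x(u) \ge 1$ for every $v \in V$ and $x(v) \ge 0$ for every $v$, which are precisely the constraints of \minDSlp; hence $\vx$ is feasible for \minDSlp. There is no genuine obstacle in this argument, as it amounts to a containment of feasible regions; the only point worth stating explicitly is the cost-independence of the \minDSlp feasible region, which is what lets the same $\vx$ serve simultaneously for every \minDS instance supported on $G$.
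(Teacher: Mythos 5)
Your argument is correct and is exactly the reasoning the paper intends: it labels this a ``straightforward proposition'' and omits the proof precisely because the constraints of \minDSlp (domination and non-negativity) are a subset of those of \minCDSlp, so feasibility is inherited by constraint containment. Your additional remark that the feasible region is independent of the cost function, which justifies the ``for any instance'' phrasing, is a fair point the paper leaves implicit.
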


\begin{prop} \label{prop:feasible-steiner-tree}
	Consider an instance of \minCDS; let $G$ be the input graph and
	let $\vx$ be a feasible solution to \minCDSlp for this instance.
	Let $T$ be any subset of the vertices and let $\vx'$ be the
	following fractional solution: $x'(v) = x(v)$ if $v \notin T$ and
	$x'(v) = 1$ otherwise. Then $\vx'$ is a feasible solution to
	\nwSTlp for any instance of \nwST in which the input graph is $G$
	and the set of terminals is $T$.
\end{prop}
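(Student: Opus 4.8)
The plan is to verify directly that $\vx'$ satisfies both families of constraints defining \nwSTlp. Nonnegativity is immediate: for $v \notin T$ we have $x'(v) = x(v) \geq 0$ by feasibility of $\vx$ for \minCDSlp, and for $v \in T$ we have $x'(v) = 1 \geq 0$. So all the work lies in the separator constraints $\sum_{v \in \Gamma(S)} x'(v) \geq 1$, one for each $S \in \sS_T$.

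Fix such an $S$; by definition both $S \cap T$ and $(V - S) \cap T$ are non-empty, so I would choose terminals $t_1 \in S \cap T$ and $t_2 \in (V - S) \cap T$. First I would dispose of the easy case in which $\Gamma(S)$ already contains a terminal $t$: then $x'(t) = 1$ and, since every term of the sum is non-negative, $\sum_{v \in \Gamma(S)} x'(v) \geq x'(t) = 1$. The substantive case is $\Gamma(S) \cap T = \emptyset$. Here the plan is to argue that $S$ in fact lies in the collection $\sS$ used by \minCDSlp, so that the separator family of \minCDSlp constraints applies. Membership in $\sS$ requires $S$, $\Gamma(S)$, and $V - (S \cup \Gamma(S))$ all non-empty: the first holds since $t_1 \in S$; for the third, because $\Gamma(S)$ contains no terminal we have $t_2 \notin \Gamma(S)$ while $t_2 \in V - S$, whence $t_2 \in V - (S \cup \Gamma(S))$; and $\Gamma(S) \neq \emptyset$ because $G$ is connected and both $S$ and $V - S$ are non-empty, so some edge must cross the cut. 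Thus $S \in \sS$, and feasibility of $\vx$ yields $\sum_{v \in \Gamma(S)} x(v) \geq 1$. Since $x'$ and $x$ agree on $\Gamma(S)$ (there are no terminals there, so no coordinate was raised to $1$), the same bound holds for $x'$, completing the constraint.

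The only delicate point is this second case: one must notice that the hypothesis ``$\Gamma(S)$ contains no terminal'' is precisely what certifies $t_2 \in V - (S \cup \Gamma(S))$ and hence $S \in \sS$, and that connectivity of $G$ is what rules out an empty separator $\Gamma(S)$. Once these two observations are in place the argument is pure bookkeeping, so I expect no real obstacle beyond correctly invoking the right constraint of \minCDSlp in each case.
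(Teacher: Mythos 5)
Your proof is correct and takes essentially the same approach as the paper: both arguments reduce each constraint for $S \in \sS_T$ either to the separator constraint of \minCDSlp (when $S \in \sS$) or to the presence of a terminal in $\Gamma(S)$ (when $S \cup \Gamma(S) = V$). You merely flip the case split (casing on whether $\Gamma(S)$ meets $T$ rather than on whether $V - (S \cup \Gamma(S))$ is empty) and add the minor extra check that $\Gamma(S) \neq \emptyset$ by connectivity, which the paper leaves implicit.
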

\begin{proof}
	Consider a set $S \in \sS_T$. If $V - (S \cup \Gamma(S))$ is
	non-empty, $S \in \sS$ and the fact that $\vx$ is a feasible
	solution to \minCDSlp gives us that $x(\Gamma(S))$ is at least
	one. Therefore we may assume that $S \cup \Gamma(S) = V$. Since
	$S$ separates the terminals, $\Gamma(S)$ contains a terminal and
	thus $x'(\Gamma(S))$ is at least one.
\end{proof}

\begin{corollary}
	Let $\sG$ be a family of graphs.  Let $\sA_1$ be a polynomial
	time rounding algorithm for \minDSlp that is $r_1$-approximate
	on instances in which the graph is in $\sG$. Let
	$\sA_2$ be a polynomial time rounding algorithm for \nwSTlp that
	is $r_2$-approximate on instances in which the graph is in $\sG$.
	Given $\sA_1$ and $\sA_2$, we can design a
	polynomial time a rounding algorithm for \minCDSlp that is
	$(r_1 + r_2)$-approximate on instances in which the graph is in
	$\sG$.
\end{corollary}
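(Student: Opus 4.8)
The plan is to round a fractional solution $\vx$ to \minCDSlp in two stages: first extract a (possibly disconnected) dominating set using $\sA_1$, and then connect it up by treating the vertices of that dominating set as terminals for a node-weighted Steiner tree instance solved by $\sA_2$. Throughout, the underlying graph is the same $G \in \sG$, so both rounding algorithms apply and meet their stated guarantees, and the whole procedure makes only one call to each of $\sA_1$ and $\sA_2$, hence runs in polynomial time.

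First I would invoke Proposition~\ref{prop:feasible-dominating-set}: the given $\vx$ is feasible for \minDSlp on the same graph $G$ with the same costs. Running $\sA_1$ on $\vx$ therefore produces an integral dominating set $D$ with $\cost(D) \le r_1 \sum_{v} x(v)\cost(v)$. At this point $D$ dominates every vertex but need not be connected.

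Next I would set up a \nwST instance on $G$ with terminal set $T = D$. The key accounting step is to assign weight $w(v) = \cost(v)$ to every non-terminal $v \notin D$ and weight $w(v) = 0$ to every terminal $v \in D$; since the terminals are forced into any Steiner tree, charging them nothing avoids paying a second time for the cost of $D$. By Proposition~\ref{prop:feasible-steiner-tree}, the vector $\vx'$ with $x'(v) = x(v)$ for $v \notin T$ and $x'(v) = 1$ for $v \in T$ is feasible for \nwSTlp with terminals $T$; under $w$ its cost is exactly $\sum_{v \notin D} x(v)\cost(v) \le \sum_{v} x(v)\cost(v)$, because the terminal coordinates are multiplied by zero weight. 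Running $\sA_2$ on $\vx'$ then yields a connected, node-induced Steiner tree $H$ spanning $D$ whose cost under $w$ — that is, the total $\cost$ of its non-terminal (Steiner) vertices — is at most $r_2 \sum_{v} x(v)\cost(v)$.

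Finally I would return $V(H)$. Since $H$ is connected and contains all of $D$, the set $V(H) \supseteq D$ induces a connected subgraph and inherits domination from $D$, so it is a connected dominating set. Splitting its cost into the terminal part, which is at most $\cost(D) \le r_1 \sum_{v} x(v)\cost(v)$, and the Steiner part, which is at most $r_2 \sum_{v} x(v)\cost(v)$, yields total cost at most $(r_1 + r_2)\sum_{v} x(v)\cost(v)$, as required. The only genuinely delicate point is this cost bookkeeping: zeroing out the terminal weights so that the two approximation factors \emph{add} rather than compound, and noting that the output connected dominating set is the full vertex set $V(H)$ (whose terminal portion is charged against $\cost(D)$) rather than $D$ alone.
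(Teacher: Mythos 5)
Your proposal is correct and follows essentially the same two-stage argument as the paper: round $\vx$ via $\sA_1$ to a dominating set $D$ using Proposition~\ref{prop:feasible-dominating-set}, then connect $D$ by solving a node-weighted Steiner instance with terminals $D$, zeroed terminal weights, and the lifted solution $\vx'$ from Proposition~\ref{prop:feasible-steiner-tree}, finally returning $V(H)$. The cost bookkeeping (terminal part charged to $r_1$, Steiner part to $r_2$, so the factors add) matches the paper's proof exactly.
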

\begin{proof}
	Let $\pair{G, \cost}$ be an instance of \minCDS, where $G \in
	\sG$. Let $\vx$ be a feasible solution to \minCDSlp for this
	instance. Let $C = \sum_{v \in V} x(v) \cost(v)$. Our goal is to
	show that we can construct in polynomial time a connected
	dominating set $D'$ whose cost $\cost(D')$ is at most
	$(r_1+r_2)C$. By Proposition~\ref{prop:feasible-dominating-set},
	$\vx$ is a feasible solution to \minDSlp for the instance
	$\pair{G, \cost}$.  Thus we can run $\sA_1$ with $\vx$ as input
	in order to get a dominating set $D$ such that $\cost(D) \leq r_1
	\cdotp C$. Once we have the dominating set $D$, we consider the
	following instance of \nwST. The nodes in $D$ will be the
	terminals. We define a set of weights as follows: for each vertex
	$v$, we have $w(v) = \cost(v)$ if $v \notin D$ and $w(v) = 0$
	otherwise. We define a fractional solution $\vx'$ as follows: for
	each vertex $v$, we have $\vx'(v) = \vx(v)$ if $v \notin D$ and
	$\vx'(v) = 1$ otherwise. Let $W = \sum_{v \in V} w(v) x'(v)$;
	note that $W = \sum_{v \in V-D} x(v) \cost(v) \leq C$. By
	Proposition~\ref{prop:feasible-steiner-tree}, $\vx'$ is a
	feasible solution to \nwSTlp for the instance $\pair{G, w, D}$.
	Thus we can run $\sA_2$ with $\vx'$ as input in order to get a
	node-induced connected subgraph $H$ of $G$ that spans $D$ and it
	has weight $w(H) \leq r_2 \cdotp W$. Let $D' = V(H)$; since $D$
	is a subset of $D'$, $D'$ is a dominating set. Additionally,
	$\cost(D') \leq (r_1 + r_2) C$.
\end{proof}

\medskip\noindent
Consider the relaxation \minDSlp. For general graphs, we can show an
$O(\ln{n})$ upper bound on the integrality gap using the following
standard randomized rounding approach. Given a fractional
solution $\vx$, we select a set $D$ of nodes as follows: for each
vertex $v$, we add $v$ to $D$ independently at random with
probability $\min\set{c\ln{n} \cdotp x(v), 1}$, where $c$ is a large
enough constant. With high probability, the resulting set $D$ is a
dominating set. For minor-closed families of graphs, we give a
primal-dual algorithm in Section~\ref{sec:min-cost-ds-planar} that
shows that the integrality gap is $O(1)$. We remark that the \minDS
problem admits a $\PTAS$ in planar graphs \cite{Baker94}, but the
algorithm of \cite{Baker94} does not give an upper bound on the
integrality gap of the LP.

\begin{theorem} \label{thm:min-ds-planar-integrality-gap}
	Let $\sG$ be a minor-closed family of graphs. There is a
	polynomial time rounding algorithm for \minDSlp that is
	$c(\sG)$-approximate on instances in which the graph is in $\sG$,
	where $c(\sG)$ is a constant that depends only on the family
	$\sG$.
\end{theorem}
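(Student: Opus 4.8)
The plan is to give a primal--dual rounding algorithm for \minDSlp and to charge its cost against a feasible dual solution, bringing in the minor-freeness of $\sG$ only in the final accounting step. First I would recast \minDS as a hitting-set problem: a set $D$ is a dominating set iff $D\cap\Gamma^+(v)\neq\emptyset$ for every $v$, so $D$ must hit the closed-neighbourhood system $\set{\Gamma^+(v):v\in V}$. Dualizing \minDSlp gives a variable $y(v)\ge 0$ for each vertex $v$ to be dominated, maximizing $\sum_v y(v)$ subject to the packing constraints $\sum_{w\in\Gamma^+(u)}y(w)\le\cost(u)$ for every $u$ (here I use that $u\in\Gamma^+(w)\iff w\in\Gamma^+(u)$). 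Since the theorem asks us to round an arbitrary feasible $\vx$, it suffices to produce an integral dominating set $D$ together with a feasible dual $y$ such that $\cost(D)\le c(\sG)\sum_v y(v)$; since $\sum_v y(v)$ is, by weak duality, at most the value $\sum_v x(v)\cost(v)$ of any feasible primal solution, this is exactly the claimed guarantee, and the input $\vx$ enters only as this benchmark.

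For the algorithm I would grow the dual in the standard way: start with $y\equiv 0$ and $D=\emptyset$, and while some vertex is still undominated, raise the $y$-variables of all currently undominated vertices uniformly until a packing constraint of some vertex $u$ becomes tight, adding every such $u$ to $D$. When all vertices are dominated, run a reverse-delete pass, discarding a vertex of $D$ whenever the remaining set is still dominating; this makes $D$ an inclusion-minimal dominating set, so every $u\in D$ retains a private vertex $p(u)$ that only $u$ dominates inside $D$. Because each $u\in D$ is tight, the cost can be rewritten as
\[
  \cost(D)=\sum_{u\in D}\cost(u)=\sum_{u\in D}\sum_{w\in\Gamma^+(u)}y(w)=\sum_{v\in V}y(v)\,\bigl|D\cap\Gamma^+(v)\bigr|,
\]
so the whole problem reduces to bounding this weighted sum by $c(\sG)\sum_v y(v)$.

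The hard part is exactly this last bound, and minor-freeness must enter here. The tempting route --- arguing $\bigl|D\cap\Gamma^+(v)\bigr|=O(1)$ for every $v$ in the dual support --- is simply false: already in a star or a subdivided star a single vertex can be dominated by arbitrarily many members of a minimal $D$. What rescues the bound is that such heavily dominated vertices necessarily receive a \emph{tiny} dual value (they get dominated very early, while $y$ is still small), so the analysis cannot be purely local. Thus I would prove an \emph{aggregate} charging lemma that exploits two structural features of any graph in a nontrivial minor-closed family $\sG$: it is $d$-degenerate and it contains no $K_{s,s}$ subgraph, for constants $d=d(\sG)$ and $s=s(\sG)$ (the latter because $K_{s,s}$ contains a large clique minor). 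Concretely, I would contract each private pair $\set{u,p(u)}$ into a single super-node and study the $y$-weighted domination incidences between $D$ and the dual support; the excluded $K_{s,s}$ subgraph together with degeneracy forces this incidence structure to be sparse, and combining sparsity with the uniform-raising schedule --- which ties the dual a vertex accumulates to the order in which it is dominated --- should bound the weighted overlap by a constant times $\sum_v y(v)$.

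Getting this amortized argument to track the interaction between the algorithm's choices and the sparse minor structure, rather than settling for the false per-vertex frequency bound, is the main obstacle; this sparsity is precisely what replaces the $\Theta(\ln n)$ factor that the naive randomized-rounding argument incurs for general graphs. An equivalent and perhaps cleaner conceptual view of why a constant is attainable is that the closed-neighbourhood system of a $K_t$-minor-free graph admits $\eps$-nets of size $O(1/\eps)$, and a dominating set is exactly a $\bigl(1/\sum_v x(v)\bigr)$-net for the measure proportional to $\vx$.
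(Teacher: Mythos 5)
Your setup is the paper's: the same primal--dual ascent with uniform raising of the duals of undominated vertices, the same reverse-delete, and the same reduction via complementary slackness and weak duality to bounding $\sum_{v}y(v)\card{D\cap\Gamma^+(v)}$ by $c(\sG)\sum_v y(v)$. You also correctly diagnose that a pointwise bound on $\card{D\cap\Gamma^+(v)}$ is hopeless and that the argument must be amortized over the run of the algorithm. But the amortized argument itself --- which is the entire content of the theorem --- is not supplied: you name it as ``the main obstacle,'' propose contracting private pairs and invoking degeneracy plus $K_{s,s}$-freeness, and assert that the resulting incidence structure ``should'' be sparse enough. Nothing in the proposal actually ties the dual values to the sparsity, so as written the proof has a genuine gap exactly at its crux.

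The paper closes this gap with a per-iteration decomposition. Because the duals of the undominated set $A_i$ are raised uniformly by some $\epsilon$ in iteration $i$, the quantity $\sum_v y(v)\card{Y\cap\Gamma^+(v)}$ increases by $\epsilon\sum_{v\in A_i}\card{Y\cap\Gamma^+(v)}$ while $\sum_v y(v)$ increases by $\epsilon\card{A_i}$; moreover for $v\in A_i$ we have $\Gamma^+(v)\cap X_{i-1}=\emptyset$, so $Y\cap\Gamma^+(v)=W_i\cap\Gamma^+(v)$ where $W_i=Y-X_{i-1}$. It therefore suffices to show $\sum_{v\in A_i}\card{W_i\cap\Gamma^+(v)}\le\gamma\card{A_i}$ for every iteration $i$ (Lemma~\ref{lem:approx-condition}). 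This is done in two steps (Lemmas~\ref{lem:counting-first} and~\ref{lem:counting-second}): first, the sum counts, with small multiplicity, edges between $A_i$ and $W_i$, so bounded edge density alone gives $\sum_{v\in A_i}\card{W_i\cap\Gamma^+(v)}\le\card{A_i\cap W_i}+c'(\sG)(\card{A_i}+3\card{W_i})$ --- no $K_{s,s}$-freeness or contraction of private pairs is needed, only $\card{E(K)}\le c'(\sG)\card{V(K)}$ for subgraphs $K$ of graphs in $\sG$; second, $\card{W_i}\le\card{A_i}$, which follows from a sharpened version of your private-vertex observation: because reverse-delete processes vertices in reverse order of selection, each surviving $w\in W_i$ has a witness $v$ with $\Gamma^+(v)\cap(X_{i-1}\cup Y)=\set{w}$ (privacy with respect to $X_{i-1}\cup Y$, not merely the final $D$), such a $v$ is undominated by $X_{i-1}$ and hence lies in $A_i$, and the ordering forces distinct $w$'s to have distinct witnesses. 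Your inclusion-minimality alone does not place the private vertex in $A_i$, and your global intuition that heavily dominated vertices receive tiny duals, like the closing $\eps$-net remark, is never turned into an inequality; the per-iteration witness counting is what actually does the work.
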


\noindent
Finally, consider the relaxation \nwST. Guha \etal \cite{GuhaMNS99}
showed that the integrality gap is $O(\ln{n})$ for general graphs,
and Demaine \etal \cite{DemaineHK09} showed that the integrality gap
is $O(1)$ for minor-closed families of graphs.  This completes the
proof of Theorem~\ref{thm:cds-packing-general} and
Theorem~\ref{thm:cds-packing-minor-free}.

\section{Algorithm for \minDS in minor-closed families of graphs}
\label{sec:min-cost-ds-planar}

In this section, we give a primal-dual algorithm for the minimum cost
\prob{Dominating Set} problem (\minDS) in minor-closed families of
graphs that achieves a constant factor approximation. The algorithm
will also establish a matching upper bound on the integrality gap of
the standard LP relaxation for the problem that was given in
Section~\ref{sec:cds-packing}.

Let $G = (V, E)$ be a node-weighted graph, and let $\cost(v)$ denote
the cost of $v$. As before, for each vertex $v$, we let $\Gamma(v)$
denote the set of all neighbors of $v$ in $G$. Let $\Gamma^+(v) =
\Gamma(v) \cup \set{v}$. The primal and dual LPs are described below;
we omit the constraint $x(v) \leq 1$ from the primal LP, since it is
redundant.

\begin{center}
\begin{boxedminipage}{0.45\textwidth}
\vspace{-0.15in}
\begin{align*}
	&  \minDSlp &\\
	\quad \min \quad & \sum_{v \in V} x(v) \cost(v) &\\
	\text{s.t.} \quad & \sum_{ u \in \Gamma^+(v)} x(u) \geq 1 \quad &
	v\in V\\
	& x(v) \geq 0 & v\in V
\end{align*}
\end{boxedminipage}
\vspace{0.1in}
\begin{boxedminipage}{0.45\textwidth}
\vspace{-0.15in}
\begin{align*}
	&  \text{Dual of } \minDSlp &\\
	\quad \max \quad & \sum_{v \in V} y(v)\\
	\text{s.t.} \quad & \sum_{u \in \Gamma^+(v)} y(u) \leq
	\cost(v) & v\in V\\
	& y(v) \geq 0 & v\in V
\end{align*}
\end{boxedminipage}
\end{center}

\noindent
The algorithm is based on the primal-dual framework of Goemans and
Williamson \cite{GoemansW95}. The algorithm selects a dominating set
$X$ for $G$. Initially, $X$ consists of all vertices with zero cost.
We also maintain a dual solution $\vy$; initially, $y(v) = 0$ for all
$v \in V$. We proceed in iterations. Consider an iteration $i$ and
let $X_{i - 1}$ be the set of nodes selected in the first $i - 1$
iterations.  Let $A_i$ be the set of all vertices $v \in V$ such that
$X_{i - 1} \cap \Gamma^+(v)$ is empty. If $A_i$ is empty, $X_{i - 1}$
is a dominating set and we return $X_{i - 1}$.  Otherwise, we
increase the dual variables $\set{y(a) \sep a \in A_i}$ uniformly
until a dual constraint for a node $v$ becomes tight, i.e., we have
$\sum_{u \in \Gamma^+(v)} y(v) = \cost(v)$; we add all the tight
vertices to $X$.

We note that, in each iteration $i$, it is possible to increase the
dual variables corresponding to the nodes of $A_i$. The set $X_{i -
1}$ contains all the vertices whose dual constraints are tight at the
beginning of iteration $i$. Thus, at the beginning of iteration $i$,
for each vertex $a \in A_i$ and each vertex $v$ such that $a \in
\Gamma^+(v)$, the dual constraint corresponding to $v$ is slack,
i.e., we have $\sum_{u \in \Gamma^+(v)} y(u) < \cost(v)$. Therefore
the algorithm terminates in at most $n$ iterations.

Finally, we perform a \emph{reverse-delete step}. Let $X$ be the
dominating set selected by the primal-dual algorithm. We select a
subset $Y$ as follows. We start with $Y = X$. We order the vertices
of $Y$ in the reverse of the order in which they were selected by the
primal-dual algorithm. We consider the vertices of $Y$ in this order.
Let $v$ be the current vertex. If $Y - v$ is a dominating set, we
remove $v$ from $Y$.

The algorithm described above is well-defined on general graphs, but
its approximation is $\Omega(n)$. In the following, we show that we
can take advantage of the fact that minor-free graphs are sparse in
order to show that the algorithm achieves a constant factor
approximation in minor-closed families of graphs; the constant
depends on the family.

We start by noting that the dual solution $\vy$ satisfies the
complementary slackness conditions.

\begin{prop} \label{prop:complementary-slackness}
	For each vertex $v \in Y$, we have $\sum_{u \in \Gamma^+(v)}
	y(u) = \cost(v)$.
\end{prop}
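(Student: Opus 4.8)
The plan is to reduce the claim to a statement about the set $X$ produced by the forward phase of the algorithm. Since the reverse-delete step only removes vertices and never touches the dual solution $\vy$, and since $Y \subseteq X$, it suffices to show that the dual constraint of every vertex ever placed in $X$ is tight in the final dual solution. First I would record the two ways a vertex can enter $X$: either it has zero cost and is placed in $X$ during initialization, or its dual constraint $\sum_{u \in \Gamma^+(v)} y(u) \le \cost(v)$ becomes tight during some iteration $j$ and it is added at the end of that iteration. In both cases the constraint of $v$ is tight at the moment $v$ enters $X$: for a zero-cost vertex this holds because all dual variables are initially $0 = \cost(v)$, and for a vertex added in iteration $j$ this is exactly the stopping rule that triggers its selection.

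The heart of the argument is to show that the left-hand side $\sum_{u \in \Gamma^+(v)} y(u)$ never changes once $v$ has been placed in $X$, so that a constraint which is tight upon entry remains tight at termination. The key observation is that a dual variable $y(a)$ is raised in iteration $i$ only when $a \in A_i$, i.e.\ when $a$ is not yet dominated by $X_{i-1}$. I would argue that if $v \in X_{i-1}$ then no vertex $a \in \Gamma^+(v)$ can lie in $A_i$: since the graph is undirected, $a \in \Gamma^+(v)$ is equivalent to $v \in \Gamma^+(a)$, so $v \in X_{i-1} \cap \Gamma^+(a)$ witnesses that $a$ is already dominated and hence $a \notin A_i$. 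Consequently, from the iteration in which $v$ enters $X$ onward, every dual variable appearing in $v$'s constraint is frozen, and $\sum_{u \in \Gamma^+(v)} y(u)$ stays equal to $\cost(v)$.

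Combining these two points yields the proposition: every $v \in Y \subseteq X$ has a tight dual constraint at the moment it is added, that sum is neither increased nor decreased afterwards, and the reverse-delete step leaves $\vy$ untouched, so the tightness persists in the final solution. I expect the freezing step to be the only real obstacle; the remaining details—in particular handling vertices that become tight simultaneously within a single iteration, and the zero-cost vertices inserted at initialization—are routine once the freezing observation is in place.
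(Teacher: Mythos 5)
Your proof is correct. The paper states this proposition without proof, treating it as an immediate consequence of the primal-dual design, and your argument --- tightness of $v$'s constraint at the moment it enters $X$, combined with the observation that once $v \in X_{i-1}$ no vertex $a \in \Gamma^+(v)$ can lie in $A_i$ (since $v \in X_{i-1} \cap \Gamma^+(a)$ certifies that $a$ is already dominated), so every dual variable in $v$'s constraint is frozen thereafter --- is exactly the intended justification.
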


\noindent
The following lemma gives us a very convenient way to upper bound the
approximation ratio. The lemma follows from a standard primal-dual
analysis and the fact that the algorithm increases the dual variables
uniformly in each iteration. Recall that $Y$ is the final dominating
set after performing reverse-delete, and $X_{i - 1}$ is the set of
vertices selected in the first $i - 1$ iterations of the algorithm.

\begin{lemma} \label{lem:approx-condition}
	Let $W_i = Y - X_{i - 1}$. Suppose that there exists a $\gamma$
	such that, for each iteration $i$ of the algorithm, we have
		$$\sum_{v \in A_i} \card{W_i \cap \Gamma^+(v)} \leq \gamma
		\card{A_i}.$$
	Then the cost of $Y$ is at most $\gamma \cdotp \opt$, where
	$\opt$ is the cost of the optimal solution to \minDSlp.
\end{lemma}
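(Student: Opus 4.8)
The plan is to carry out the standard Goemans--Williamson primal-dual accounting: use the complementary slackness of Proposition~\ref{prop:complementary-slackness} to rewrite $\cost(Y)$ in terms of the dual variables, redistribute this charge across the iterations in which the duals were raised, and then invoke the hypothesis to show that each iteration is charged by at most a $\gamma$ factor more than it contributes to the dual objective. Weak duality then closes the argument.

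First I would record how $\vy$ is built. For each iteration $i$, let $\delta_i$ be the common amount by which the algorithm raises the variables $\set{y(a) \sep a \in A_i}$. A vertex $u$ receives an increment exactly in those iterations in which it is still undominated, so $y(u) = \sum_{i : u \in A_i} \delta_i$. Using Proposition~\ref{prop:complementary-slackness} together with the symmetry that $u \in \Gamma^+(v)$ if and only if $v \in \Gamma^+(u)$, I would expand
$$\cost(Y) = \sum_{v \in Y} \sum_{u \in \Gamma^+(v)} y(u) = \sum_{u \in V} y(u) \, \card{Y \cap \Gamma^+(u)},$$
and then substitute the expression for $y(u)$ and interchange the order of summation to obtain
$$\cost(Y) = \sum_i \delta_i \sum_{u \in A_i} \card{Y \cap \Gamma^+(u)}.$$

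The key step, and the one I expect to require the most care, is reconciling the final set $Y$ with the iteration-$i$ snapshot $W_i = Y - X_{i-1}$ that appears in the hypothesis. The point is that if $u \in A_i$, then $\Gamma^+(u)$ is disjoint from $X_{i-1}$, since that is precisely what it means for $u$ to be undominated at the start of iteration $i$. Consequently $Y \cap \Gamma^+(u) = (Y - X_{i-1}) \cap \Gamma^+(u) = W_i \cap \Gamma^+(u)$, so the inner sum above is exactly $\sum_{u \in A_i} \card{W_i \cap \Gamma^+(u)}$, which the hypothesis bounds by $\gamma \card{A_i}$. Everything else is routine primal-dual bookkeeping; this identification of the two set intersections is the only genuinely substantive move.

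Finally I would run the summations in reverse to convert the resulting bound into the dual objective:
$$\cost(Y) \leq \gamma \sum_i \delta_i \card{A_i} = \gamma \sum_i \delta_i \sum_{a \in A_i} 1 = \gamma \sum_{a \in V} y(a) \leq \gamma \cdotp \opt,$$
where the first equality rewrites $\card{A_i}$ as a sum of ones over $A_i$, the second collects the increments back into $y(a) = \sum_{i : a \in A_i} \delta_i$, and the final inequality is weak LP duality, applied to the feasible dual solution $\vy$ against the optimum $\opt$ of \minDSlp (the algorithm maintains dual feasibility, as it stops raising variables as soon as a constraint would be violated). This yields $\cost(Y) \leq \gamma \cdotp \opt$, as desired.
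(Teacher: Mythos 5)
Your proposal is correct and follows essentially the same route as the paper: both rewrite $\cost(Y)$ via complementary slackness as $\sum_{u} y(u)\,\card{Y \cap \Gamma^+(u)}$, use the key observation that $u \in A_i$ forces $Y \cap \Gamma^+(u) = W_i \cap \Gamma^+(u)$, and finish with weak duality. The only cosmetic difference is that you unroll $y(u)$ into per-iteration increments and swap summation order, whereas the paper establishes the same inequality by induction on the iterations.
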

\begin{proof}
	By Proposition~\ref{prop:complementary-slackness}, we have
		$$\sum_{v \in Y} \cost(v) = \sum_{v \in Y} \sum_{u \in
		\Gamma^+(v)} y(u).$$
	By rearranging the second summation, we get that
		$$\sum_{v \in Y} \cost(v) = \sum_{v \in Y} \sum_{u \in
		\Gamma^+(v)} y(u) = \sum_{v \in V} y(v) \card{Y \cap
		\Gamma^+(v)}.$$
	Since $\vy$ is a feasible dual solution, by weak duality, we have
		$$\opt \geq \sum_{v \in V} y(v).$$
	Therefore it suffices to show that
		$$\sum_{v \in V} y(v) \card{Y \cap \Gamma^+(v)} \leq
		\gamma \sum_{v \in V} y(v).$$
	We can prove the inequality above by induction on the number of
	iterations. Initially, $y(v) = 0$ for all vertices $v$ and the
	inequality clearly holds. Now consider an iteration $i \geq 1$.
	Let $\epsilon$ be the amount by which the dual variables
	$\set{y(a) \sep a \in A_i}$ are increased in iteration $i$. The
	right-hand side of the inequality increases by $\epsilon
	\card{A_i}$. Thus, if we can show that the left-hand side
	increases by at most $\epsilon \gamma \card{A_i}$, the inequality
	will follow. The left hand side of the inequality
	increases by $\epsilon \sum_{v \in A_i} \card{Y \cap
	\Gamma^+(v)}$. For each $v \in A_i$, we have $\Gamma^+(v) \cap
	X_{i - 1}$ is empty, and thus
		$$\sum_{v \in A_i} \card{Y \cap \Gamma^+(v)} = \sum_{v \in
		A_i} \card{W_i \cap \Gamma^+(v)} \leq \gamma \card{A_i},$$
	where the last inequality follows from the assumption in the
	statement of the lemma. Therefore the left-hand side increases by
	at most $\epsilon \gamma \card{A_i}$, and the lemma follows.
\end{proof}

\medskip\noindent
Therefore, in order to upper bound the approximation ratio of the
algorithm, it suffices to prove the following key lemma. The lemma
follows from the minimality of $Y$ and the fact that minor-free
graphs are sparse, in the sense that the number of edges is
proportional to the number of vertices.

\begin{lemma} \label{lem:counting}
	Suppose that the input graph $G$ belongs to a minor-closed
	familty $\sG$ of graphs. There is a constant $c({\sG})$ depending
	only on $\sG$ such that, for each iteration $i$ of the algorithm,
	we have
		$$\sum_{u \in A_i} \card{W_i \cap \Gamma^+(u)} \leq c({\sG})
		\cdotp \card{A_i},$$
	where $W_i = Y - X_{i - 1}$.
\end{lemma}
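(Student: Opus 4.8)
The plan is to prove the lemma in two steps: first to show that $\card{W_i} \le \card{A_i}$, and then to use the sparsity of minor-closed families to control the edges between $A_i$ and $W_i$. As a preliminary reduction, note that the quantity to be bounded splits as
$$\sum_{u \in A_i} \card{W_i \cap \Gamma^+(u)} = \card{A_i \cap W_i} + \sum_{u \in A_i} \card{W_i \cap \Gamma(u)},$$
since $\Gamma^+(u) = \Gamma(u) \cup \set{u}$ and the term $u = w$ contributes exactly when $u \in A_i \cap W_i$. The first summand is at most $\card{A_i}$ trivially, and the second counts ordered pairs $(u,w)$ with $u \in A_i$, $w \in W_i$, and $uw \in E$; so all the work lies in bounding these $A_i$--$W_i$ edges by $O(\card{A_i})$.

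The key step, and the one I expect to be the main obstacle, is the bound $\card{W_i} \le \card{A_i}$, which I would prove by constructing an injection $W_i \to A_i$. Fix $w \in W_i$. Since $w$ survives the reverse-delete step, at the instant $w$ is examined the current set $Y'$ satisfies that $Y' - w$ is not dominating; let $p(w)$ be a vertex dominated by $w$ but by no vertex of $Y' - w$, so $p(w) \in \Gamma^+(w)$. Because $w \in W_i = Y - X_{i-1}$ was selected in some iteration $j \ge i$, every vertex of $X_{i-1}$ was selected strictly earlier than $w$ and is therefore examined strictly \emph{later} in the reverse-delete order; hence $X_{i-1} \subseteq Y'$, and since $w \notin X_{i-1}$ we have $X_{i-1} \subseteq Y' - w$. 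As $p(w)$ is dominated by no vertex of $Y' - w$, it is dominated by no vertex of $X_{i-1}$, so $\Gamma^+(p(w)) \cap X_{i-1} = \emptyset$, i.e.\ $p(w) \in A_i$. For injectivity, suppose $p(w_1) = p(w_2) =: p$ with $w_1 \ne w_2$ in $W_i$, and say $w_1$ is examined before $w_2$; then $w_1$ is still present when $w_2$ is examined, so $w_1 \in Y'' - w_2$ for the set $Y''$ current at that moment, and $p \in \Gamma^+(w_1)$ means $p$ is dominated by $w_1 \in Y'' - w_2$, contradicting the choice of $p = p(w_2)$. Thus $w \mapsto p(w)$ is an injection and $\card{W_i} \le \card{A_i}$. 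The delicate point here is that $A_i$ is defined through the primal-dual set $X_{i-1}$ while $W_i$ and the private witnesses come from the post-reverse-delete set $Y$; the bridge is precisely the observation that reverse-delete processes vertices in reverse selection order, so all of $X_{i-1}$ is still present when any $w \in W_i$ is examined.

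For the second step, the induced subgraph $G[A_i \cup W_i]$ is a minor of $G$ and so belongs to $\sG$. Since graphs in $\sG$ are sparse, there is a constant $d = d(\sG)$ such that any graph in $\sG$ has at most $d$ times as many edges as vertices; hence $\card{E(G[A_i \cup W_i])} \le d\,\card{A_i \cup W_i}$. Combining with $\card{W_i} \le \card{A_i}$ gives $\card{A_i \cup W_i} \le 2\card{A_i}$, so there are at most $2d\,\card{A_i}$ edges with one endpoint in $A_i$ and the other in $W_i$; as each such edge is counted at most twice in $\sum_{u \in A_i}\card{W_i \cap \Gamma(u)}$, that sum is at most $4d\,\card{A_i}$. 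Adding the term $\card{A_i \cap W_i} \le \card{A_i}$ yields the lemma with $c(\sG) = 1 + 4d(\sG)$.
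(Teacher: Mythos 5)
Your proposal is correct and follows essentially the same two-step route as the paper: the bound $\card{W_i}\le\card{A_i}$ via private/witness vertices produced by the reverse-delete step (your injection $w\mapsto p(w)$ is the paper's witness argument read in the other direction), combined with the sparsity of minor-closed families to count the $A_i$--$W_i$ adjacencies. The only difference is cosmetic: the paper splits the edge count into three subgraphs to get $\card{A_i\cap W_i}+c'(\sG)(\card{A_i}+3\card{W_i})$, while you bound all edges of $G[A_i\cup W_i]$ at once; both yield the same constant $1+4c'(\sG)$.
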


\noindent
We devote the rest of this section to the proof of
Lemma~\ref{lem:counting}. We will prove the lemma in two steps. In
the first step, we use the sparsity of minor-free graphs to show that
the sum $\sum_{u \in A_i} \card{W_i \cap \Gamma^+(u)}$ is
at most a constant times larger than $\card{A_i} + \card{W_i}$. In
the second step, we use the minimality of $W_i$ to show that
$\card{W_i} \leq \card{A_i}$.

\begin{lemma} \label{lem:counting-first}
	Let $c'(\sG)$ be a constant such that, for each graph $K \in
	\sG$, we have $\card{E(K)} \leq c'(\sG) \card{V(K)}$.  For each
	iteration $i$, we have
		$$\sum_{u \in A_i} \card{W_i \cap \Gamma^+(u)} \leq
		\card{A_i \cap W_i} + c'(\sG) (\card{A_i} + 3\card{W_i}).$$
\end{lemma}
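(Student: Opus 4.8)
The plan is to separate, for each $u \in A_i$, the contribution of $u$ to itself from its contribution through its neighbors. Writing $\Gamma^+(u) = \Gamma(u) \cup \set{u}$, the vertex $u$ itself contributes $1$ to $\card{W_i \cap \Gamma^+(u)}$ exactly when $u \in W_i$, so summing over $u \in A_i$ splits the left-hand side as
$$\sum_{u \in A_i} \card{W_i \cap \Gamma^+(u)} = \card{A_i \cap W_i} + N, \qquad N := \sum_{u \in A_i} \card{W_i \cap \Gamma(u)}.$$
It then suffices to bound $N$, and I aim to prove the clean estimate $N \le c'(\sG)(\card{A_i} + \card{W_i})$, which is stronger than the claimed $c'(\sG)(\card{A_i} + 3\card{W_i})$ and hence certainly implies the lemma.

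Next I would interpret $N$ combinatorially: it counts ordered pairs $(u,w)$ with $u \in A_i$, $w \in W_i$ and $uw \in E$. Each such pair comes from an edge of the induced subgraph $G[A_i \cup W_i]$. Since $\sG$ is minor-closed it is in particular closed under taking induced subgraphs, so $G[A_i \cup W_i] \in \sG$ and therefore $\card{E(G[A_i \cup W_i])} \le c'(\sG)\card{A_i \cup W_i}$ by the hypothesis on $c'(\sG)$. The only subtlety is that a single edge can generate two of the counted ordered pairs, and this happens precisely when \emph{both} its endpoints lie in $A_i \cap W_i$, i.e. when it is an edge of $G[A_i \cap W_i]$. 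Thus each edge of $G[A_i \cup W_i]$ contributes to $N$ at most $1$, plus an extra $1$ when it is internal to $A_i \cap W_i$, giving $N \le \card{E(G[A_i \cup W_i])} + \card{E(G[A_i \cap W_i])}$.

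To finish, I would apply the sparsity bound a second time to $G[A_i \cap W_i] \in \sG$, obtaining $\card{E(G[A_i \cap W_i])} \le c'(\sG)\card{A_i \cap W_i}$. Combining the two estimates and using the inclusion–exclusion identity $\card{A_i \cup W_i} + \card{A_i \cap W_i} = \card{A_i} + \card{W_i}$ yields
$$N \le c'(\sG)\bigl(\card{A_i \cup W_i} + \card{A_i \cap W_i}\bigr) = c'(\sG)\bigl(\card{A_i} + \card{W_i}\bigr),$$
so that $\sum_{u \in A_i} \card{W_i \cap \Gamma^+(u)} \le \card{A_i \cap W_i} + c'(\sG)(\card{A_i} + \card{W_i})$, which is the desired inequality with room to spare.

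The one place that needs genuine care is the double-counting bookkeeping when $A_i$ and $W_i$ overlap: the step from the ordered-pair count $N$ to an edge count of $G[A_i \cup W_i]$ loses a factor, but that loss is confined exactly to edges internal to the intersection $A_i \cap W_i$, and this correction is itself controlled by a second application of the sparsity bound. Everything else is routine manipulation. The generous slack in the target ($3\card{W_i}$ rather than the $\card{W_i}$ I obtain) suggests the intended argument may simply bound the multiplicity of each edge by $2$ without isolating the intersection; the refined accounting above is not strictly necessary but makes the two uses of minor-freeness transparent.
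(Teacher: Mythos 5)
Your proof is correct and takes essentially the same route as the paper's: split off the diagonal term $\card{A_i \cap W_i}$, interpret the remaining sum as an ordered-pair count over edges, and apply the sparsity of the minor-closed family to suitable subgraphs of $G$. Your two-subgraph bookkeeping ($G[A_i \cup W_i]$ plus a correction counted by $G[A_i \cap W_i]$) is slightly cleaner than the paper's decomposition into three subgraphs $G_1, G_2, G_3$ and yields the marginally stronger bound $c'(\sG)(\card{A_i} + \card{W_i})$, but the underlying argument is the same.
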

\begin{proof}
	Consider an iteration $i$ of the algorithm.  We have
		$$\sum_{u \in A_i} \card{W_i \cap \Gamma^+(u)} = \card{A_i
		\cap W_i} + \sum_{u \in A_i} \card{W_i \cap \Gamma(u)}.$$
	We can upper bound the second sum in the equation above as
	follows.  Let $G_1$ be the subgraph of $G$ whose vertices are
	$A_i \cup W_i$ and whose edges are all the edges of $G$ with one
	endpoint in $A_i - W_i$ and the other in $W_i$. Note that
	$\sum_{u \in A_i - W_i} \card{W_i \cap \Gamma(u)}$ is equal to
	the number of edges of $G_1$.  Let $G_2$ be the subgraph of $G$
	whose vertices are $W_i$ and whose edges are all the edges of $G$
	with one endpoint in $A_i \cap W_i$ and the other in $W_i - A_i$.
	Finally, let $G_3 = G[A_i \cap W_i]$ be the subgraph of $G$
	induced by $A_i \cap W_i$. Note that $\sum_{u \in A_i \cap W_i}
	\card{W_i \cap \Gamma(u)}$ is equal to the number of edges of
	$G_2$ plus the number of edges of $G_3$. Therefore we have
		$$\sum_{u \in A_i} \card{W_i \cap \Gamma(u)} = \card{E(G_1)}
		+ 2\card{E(G_2)} + \card{E(G_3)}.$$
	Therefore we have
		$$\sum_{u \in A_i} \card{W_i \cap \Gamma(u)} \leq c'(\sG)
		(\card{V(G_1)} + 2\card{V(G_2)} + \card{V(G_3)}) = c'(\sG)
		(\card{A_i} + 3\card{W_i}).$$
\end{proof}

\begin{lemma} \label{lem:counting-second}
	For each iteration $i$, we have $\card{W_i} \leq \card{A_i}$.
\end{lemma}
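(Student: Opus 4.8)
The plan is to exhibit an injection from $W_i$ into $A_i$, which gives $\card{W_i} \le \card{A_i}$ at once. The vertices I would use as images are the \emph{witnesses} that the reverse-delete step leaves behind. Since $Y$ is the output of reverse-delete, it is minimal, so for every $v \in Y$ there is a vertex dominated by $v$ and by no other vertex of $Y$; I would record one such $u(v)$ with $\Gamma^+(u(v)) \cap Y = \set{v}$ for each $v \in Y$. To pin down a useful witness, I would take $u(v)$ to be the vertex that forces reverse-delete to keep $v$: at the moment $v$ is examined, the current working set $Y'_v$ fails to dominate once $v$ is removed, so some $u$ has $\Gamma^+(u) \cap Y'_v = \set{v}$. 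Because reverse-delete only deletes vertices after this moment, $Y \subseteq Y'_v$, and hence $\Gamma^+(u(v)) \cap Y = \set{v}$ as well. Injectivity of $v \mapsto u(v)$ is then free: if $u(v_1) = u(v_2)$, then $\set{v_1} = \Gamma^+(u(v_1)) \cap Y = \set{v_2}$, so $v_1 = v_2$.

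The heart of the plan is to check that $u(v) \in A_i$ for every $v \in W_i$, i.e.\ that $\Gamma^+(u(v))$ avoids every vertex selected in the first $i-1$ iterations. Fix $v \in W_i$; then $v \in Y$ and $v$ was selected in iteration $i$ or later, so in particular $v \notin X_{i-1}$. I would argue by contradiction: suppose some $w \in \Gamma^+(u(v)) \cap X_{i-1}$. Then $w$ was selected strictly before $v$, so in the reverse-of-selection order driving reverse-delete, $w$ is examined strictly after $v$ and is therefore still present when $v$ is processed, giving $w \in Y'_v$. This places $w$ in $\Gamma^+(u(v)) \cap Y'_v = \set{v}$, forcing $w = v$ and contradicting $w \in X_{i-1}$, $v \notin X_{i-1}$. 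Hence $\Gamma^+(u(v)) \cap X_{i-1} = \emptyset$, that is, $u(v) \in A_i$, and the injection is established.

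The step I expect to require the most care is precisely this last containment, because $X_{i-1}$ is \emph{not} a subset of the final set $Y$: reverse-delete may well discard vertices chosen during the first $i-1$ iterations, so an arbitrarily chosen final-$Y$ witness of $v$ could a priori have a neighbor in $X_{i-1} \setminus Y$. The fix, which the argument above exploits, is to use the witness recorded at the instant $v$ is kept: at that instant the working set $Y'_v$ still contains all of $X_{i-1}$ (every such vertex has smaller selection index than $v$ and is examined only later), so uniqueness of the dominator within $Y'_v$ is exactly strong enough to push the witness into $A_i$.
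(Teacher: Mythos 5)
Your proof is correct and follows essentially the same route as the paper: each vertex of $W_i$ is mapped to the witness that reverse-delete leaves behind, that witness is shown to lie in $A_i$ because every vertex of $X_{i-1}$ is still present in the working set when $v$ is examined, and uniqueness of the dominator makes the map injective. If anything, you are slightly more explicit than the paper about why the witness's closed neighborhood avoids $X_{i-1}$ (the ordering argument via $Y'_v \supseteq X_{i-1} \cup Y$), which is exactly the step the paper's proof leaves implicit.
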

\begin{proof}
	Consider a vertex $w \in W_i$. We claim that, since we could not
	remove $w$ in the reverse-delete step, there is a vertex $v \in
	A_i$ such that $\Gamma^+(v) \cap (X_{i - 1} \cup Y) = \set{w}$.
	We can show this as follows. Since we could not remove $w$, there
	is a vertex $v \in V - (Y \cup X_{i - 1})$ such that $\Gamma^+(v)
	\cap (Y \cup X_{i - 1}) = \set{w}$. Since $v$ is not dominated by
	$X_{i - 1}$, $v$ is in $A_i$. Thus each vertex $w \in W_i$ has a
	\emph{witness} vertex $v \in A_i$ such that $\Gamma^+(v) \cap
	(X_{i - 1} \cup Y) = \set{w}$. Now we claim that each vertex $v
	\in A_i$ is a witness vertex for at most one vertex of $W_i$.
	Suppose for contradiction that a vertex $v \in A_i$ is a witness
	vertex for two vertices $w_1$ and $w_2$ in $W_i$. Without loss of
	generality, $w_1$ was selected by the algorithm after $w_2$.
	Consider the iteration of the reverse-delete step that considered
	$w_1$. At this point $w_2$ had not been considered yet and thus
	it is in $Y$. Thus $w_2 \in \Gamma^+(v) \cap (X_{i - 1} \cup Y)$,
	which contradicts the fact that $\Gamma^+(v) \cap (X_{i - 1} \cup
	Y) = \set{w_1}$. Therefore $\card{W_i} \leq \card{A_i}$, as
	desired.
\end{proof}

\medskip\noindent
Lemma~\ref{lem:counting} follows from Lemma~\ref{lem:counting-first}
and Lemma~\ref{lem:counting-second}. We have the following upper
bounds on the constant $c'(\sG)$ (see
Lemma~\ref{lem:counting-first}). If $\sG$ is a minor-closed family,
there is a constant-sized graph $H$ such that $\sG$ is the family of
all graphs that do not have $H$ as a minor. As shown by Kostochka
\cite{Kostochka84}, we have $c'(\sG) = O(\sqrt{\log(\card{V(H)})})$ for
the family of $H$-minor-free graphs. If $G$ is a planar graph, we
have $c'(\sG) < 3$; if $G$ is also bipartite, the constant improves
to $2$. Thus the algorithm achieves an $10$-approximation for planar
graphs.

\begin{remark}
	The algorithm above can be easily adapted to give a constant
	factor approximation for the minimum cost \prob{Steiner
	Dominating Set} problem in minor-closed families of graphs. In
	the Steiner problem, we are given a subset of vertices called
	terminals and the goal is to select a set that dominates the
	terminals.
\end{remark}

\begin{remark}
	A constant factor approximation for the minimum cost \prob{Steiner
	Dominating Set} problem in minor-free graphs can also be obtained
	via iterated rounding.
\end{remark}

\mypar{Acknowledgements:}
The results in Section~\ref{sec:cds-packing} were developed in joint
work with Chandra Chekuri and we thank him for his help. We also
thank Chandra for several other fruitful discussions and suggestions.
\newpage
\bibliographystyle{plain}
\bibliography{cds}

\begin{thebibliography}{10}

\bibitem{Baker94}
B.~S. Baker.
\newblock Approximation algorithms for {NP}-complete problems on planar graphs.
\newblock {\em Journal of the ACM}, 41(1):153--180, 1994.

\bibitem{BenczurK96}
A.~A. Bencz{\'u}r and D.~R. Karger.
\newblock Approximating st minimum cuts in $\tilde{O}(n^2)$ time.
\newblock In {\em Proc.\ of ACM STOC}, pages 47--55, 1996.

\bibitem{CalinescuCV09}
G.~C{\u{a}}linescu, C.~Chekuri, and J.~Vondr{\'a}k.
\newblock Disjoint bases in a polymatroid.
\newblock {\em Random Structures \& Algorithms}, 35(4):418--430, 2009.

\bibitem{CarrV02}
R.~Carr and S.~Vempala.
\newblock Randomized metarounding.
\newblock {\em Random Structures \& Algorithms}, 20(3):343--352, 2002.

\bibitem{CHGK}
K.~Censor-Hillel, M.~Ghaffari, and F.~Kuhn.
\newblock A new perspective on vertex connectivity.
\newblock {\em arXiv preprint arXiv:1304.4553}, 2013.

\bibitem{DasB97}
B.~Das and V.~Bharghavan.
\newblock Routing in ad-hoc networks using minimum connected dominating sets.
\newblock In {\em Proc.\ of International Conference on Communications},
  volume~1, pages 376--380, 1997.

\bibitem{DasSB97}
B.~Das, R.~Sivakumar, and V.~Bharghavan.
\newblock Routing in ad hoc networks using a spine.
\newblock In {\em Proc.\ of International Conference on Computer Communications
  and Networks}, pages 34--39, 1997.

\bibitem{DemaineHK09}
E.~D. Demaine, M.~Hajiaghayi, and P.~N. Klein.
\newblock Node-weighted {Steiner} tree and group {Steiner} tree in planar
  graphs.
\newblock In {\em Proc.\ of ICALP}, pages 328--340. Springer, 2009.

\bibitem{FeigeHKS02}
U.~Feige, M.~M. Halld{\'o}rsson, G.~Kortsarz, and A.~Srinivasan.
\newblock Approximating the domatic number.
\newblock {\em SIAM Journal on Computing}, 32(1):172--195, 2002.

\bibitem{GoemansW95}
M.~X. Goemans and D.~P. Williamson.
\newblock A general approximation technique for constrained forest problems.
\newblock {\em SIAM Journal on Computing}, 24:296, 1995.

\bibitem{GuhaK98}
S.~Guha and S.~Khuller.
\newblock Approximation algorithms for connected dominating sets.
\newblock {\em Algorithmica}, 20(4):374--387, 1998.

\bibitem{GuhaMNS99}
S.~Guha, A.~Moss, J.~S. Naor, and B.~Schieber.
\newblock Efficient recovery from power outage.
\newblock In {\em Proc.\ of ACM STOC}, pages 574--582, 1999.

\bibitem{HaynesHS98b}
T.~W. Haynes, S.~T. Hedetniemi, and P.~J. Slater.
\newblock {\em Domination in graphs: advanced topics}, volume~40.
\newblock Marcel Dekker, 1998.

\bibitem{HaynesHS98a}
T.~W. Haynes, S.~T. Hedetniemi, and P.~J. Slater.
\newblock {\em Fundamentals of domination in graphs}.
\newblock Marcel Dekker, 1998.

\bibitem{HedetniemiL91}
S.~T. Hedetniemi and R.~C. Laskar.
\newblock {\em Topics on domination}, volume~48.
\newblock North Holland, 1991.

\bibitem{Karger94}
D.~R. Karger.
\newblock Using randomized sparsification to approximate minimum cuts.
\newblock In {\em Proc.\ of ACM-SIAM SODA}, pages 424--432, 1994.

\bibitem{Karger95}
D.~R. Karger.
\newblock A randomized fully polynomial time approximation scheme for the all
  terminal network reliability problem.
\newblock In {\em Proc.\ of ACM STOC}, pages 11--17, 1995.

\bibitem{Karger99}
D.~R. Karger.
\newblock Random sampling in cut, flow, and network design problems.
\newblock {\em Mathematics of Operations Research}, 24(2):383--413, 1999.
\newblock Preliminary version in STOC 1994.

\bibitem{KargerL98}
D.~R. Karger and M.~S. Levine.
\newblock Finding maximum flows in undirected graphs seems easier than
  bipartite matching.
\newblock In {\em Proc.\ of ACM STOC}, pages 69--78, 1998.

\bibitem{Kostochka84}
A.~V. Kostochka.
\newblock Lower bound of the {Hadwiger} number of graphs by their average
  degree.
\newblock {\em Combinatorica}, 4(4):307--316, 1984.

\bibitem{MahjoubM10}
D.~Mahjoub and D.~W. Matula.
\newblock Employing (1- $\varepsilon$) dominating set partitions as backbones
  in wireless sensor networks.
\newblock In {\em Workshop on Algorithm Engineering and Experiments (ALENEX)},
  pages 98--111, 2010.

\bibitem{MisraM09}
R.~Misra and C.~Mandal.
\newblock Rotation of cds via connected domatic partition in ad hoc sensor
  networks.
\newblock {\em IEEE Transactions on Mobile Computing}, 8(4):488--499, 2009.

\bibitem{MoscibrodaW05}
T.~Moscibroda and R.~Wattenhofer.
\newblock Maximizing the lifetime of dominating sets.
\newblock In {\em Proc.\ of International Parallel and Distributed Processing
  Symposium}, 2005.

\bibitem{PemmarajuP06}
S.~V. Pemmaraju and I.~A. Pirwani.
\newblock Energy conservation via domatic partitions.
\newblock In {\em Proc.\ of International symposium on Mobile ad hoc networking
  and computing}, pages 143--154, 2006.

\end{thebibliography}


\end{document}